\newtheorem{theorem}{Theorem}
\newtheorem{proposition}{Proposition}
\title{Computational Complexity of Envy-free and Exchange-stable Seat Arrangement Problems on Grid Graphs}
\author{
    Sota Kawase\thanks{School of Social Information Science, University of Hyogo, \texttt{fa21w022@guh.u-hyogo.ac.jp}} \and Shuichi Miyazaki\thanks{Graduate School of Information Science, University of Hyogo, \texttt{shuichi@sis.u-hyogo.ac.jp}}
}
\begin{document}
\maketitle
\begin{abstract}
The {\em Seat Arrangement Problem} is a problem of finding a desirable seat arrangement for given preferences of agents and a seat graph that represents a configuration of seats.
In this paper, we consider decision problems of determining if an envy-free arrangement exists and an exchange-stable arrangement exists, when a seat graph is an $\ell \times m$ grid graph.
When $\ell=1$, the seat graph is a path of length $m$ and both problems have been known to be NP-complete.
In this paper, we extend it and show that both problems are NP-complete for any integer $\ell \geq 2$.
\end{abstract}

\section{Introduction}
Recently, a variant of resource arrangement problems, called the {\em Seat Arrangement Problem}, is studied extensively \cite{2, 4}.
In this problem, we are given a set of agents $A$ and an undirected graph $G$, called a {\em seat graph}, where the number of agents in $A$ is equal to the number of vertices of $G$.
Each agent has preference to other agents, which is expressed by {\em utility}.
An agent $i$'s utility for another agent $j$ is a numerical value, which shows how much $i$ likes $j$.
The higher the value is, the more $i$ likes $j$.
Note that the utility may be negative, in which case it can be interpreted as ``$i$ dislikes $j$.''
The seat graph $G$ represents a configuration of seats, where vertices correspond to seats and two seats are neighbors to each other if and only if corresponding two vertices are adjacent in $G$.
The task of the problem is to map the agents to vertices of $G$ in one-to-one fashion.
This mapping is called an {\em arrangement}.
A {\em utility of an agent in an arrangement} is the sum of his utilities over all the agents placed on the neighboring seats and the {\em utility of an arrangement} is the sum of the utilities of all the agents.

There are several solution concepts \cite{3, 4}.
A {\em maximum welfare arrangement} is an arrangement with maximum utility.
A {\em maxmin utility arrangement} is one that maximizes the minimum utility among all the agents.
For an arrangement, we say that an agent $i$ {\em envies} an agent $j$ if $i$'s utility increases when $i$ and $j$ exchange their seats.
An {\em envy-free arrangement} is an arrangement in which no agent envies anyone.
If agents $i$ and $j$ envy each other, $(i, j)$ is called a {\em blocking pair}, and an {\em exchange-stable arrangement} is one that has no blocking pair.
Note that an envy-free arrangement is always an exchange-stable arrangement, but the converse is not true.
For these four solution concepts, the problems of deciding if there exists a desirable seat arrangement are denoted MWA, MUA, EFA, and ESA, respectively.
Among these four problems, we focus on EFA and ESA in this paper.
These problems are considered on several graph classes and shown to be NP-complete for very restricted classes of seat graphs, such as paths or cycles \cite{2, 4}.
These hardness are also shown for seat graphs being a composition of a graph $G$ with some number of isolated vertices, when $G$ belongs to some graph class such as stars, clique, or matching \cite{4}. 
For studying computational complexity of the problem in terms of seat graph classes, it is natural to attack from simpler ones. 
Then one of the next candidates might be grid graphs, which is also remarked in \cite{1, 2}.

{\bf Our contributions.} In this paper, we study computational complexity of EFA and ESA on $\ell \times m$ grid graphs.
For applications, a grid graph may be considered as a configuration of a classroom and in this context the problem corresponds to finding a seat arrangement of students in schools.
Note that when $\ell=1$, an $\ell \times m$ grid graph is just a simple path of length $m$, and both EFA and ESA are already shown to be NP-hard as mentioned above.
We extend these results and show that both problems are NP-complete for any integer $\ell \geq 2$.
Our hardness proofs are inspired by those for path graphs \cite{2, 4}, which are reductions from the Hamiltonian path problem.
However, extensions are not straightforward because we needed several new ideas to care about existence of cycles in a grid graph, which was not necessary for path graphs.
Furthermore, for exchange-stability, we needed to craft utilities so that envy-freeness simulates exchange-stability.
For technical reasons it was necessary to prove the cases with $\ell=2$ and/or $\ell=3$ differently from general $\ell$.
Therefore, there may be several arguments that overlap in proofs, but this manuscript includes all of the full details, even if redundant, for the sake of completeness of each proof.

{\bf Related work.}
The pioneering study about the Seat Arrangement Problem is due to Bodlaender et al.~\cite{3}.
They study the price of stability and the price of fairness of a seat arrangement. 
They also study computational complexity and parameterized complexity of the above four problems with respect to the maximum size of connected components in a seat graph.
Berriaud et al.~\cite{2} study the problems for restricted classes of seat graphs, and as mentioned above, they show NP-hardness of EFA and ESA on paths and cycles.
On the positive side, they define a {\em class} as a set of agents who share a common preference function, and show that EFA and ESA on paths and cycles become polynomially solvable when the number of classes is a constant.
Ceylan et al.~\cite{4} give more refined classification on computational complexity, using parameterized complexity as well, from three aspects: seat graph classes (such as stars, cliques, and matchings with isolated vertices), problem-specific parameters (such as the number of isolated seats and maximum number of non-zero entries in preferences), and preference structure (such as non-negative preferences and symmetric preferences).
Aziz et al.~\cite{1} define a relaxed notion of the exchange-stability, called the {\em neighborhood stability}, in which no two agents assigned to adjacent vertices form a blocking pair.
They study the existence of a neighborhood stable arrangement and computational complexity of finding it when seat graphs are paths or cycles.
Rodriguez~\cite{6} introduces new notions of utility, called {\em B-utility} and {\em W-utility}, to define a utility of an agent for a seat arrangement.
B-utility (resp.~W-utility) of an agent $i$ is the utility of $i$ towards the best (resp.~worst) agent seated next to him.
Rodriguez provides algorithmic and computational complexity results of the above four problems with respective to these utilities (as well as the standard utility, which he calls {\em S-utility}) for some restricted class of preferences, including {\em 1-dimensional preference} that he introduces.
Wilczynski \cite{7} considers the Seat Assignment Problem under ordinal preferences. 
In this model, preference is specified not by a utility function but by a linear order of agents.
She investigates existence of an exchange-stable or a popular seat arrangement for simple seat graphs such as paths, cycles, and clusters.

\section{Preliminaries}

The following definitions are mostly taken from \cite{2, 4}. 
We consider the problem of assigning $n$ agents to $n$ seats.
These $n$ seats are represented as vertices of an undirected graph, called a {\it seat graph},  $G = (V, E)$ such that $|V|=n$.
Each agent $i \in A$ has a numerical {\it preference} toward other agents, which is expressed by a function $p_i : A\setminus \{ i \} \rightarrow R $, where $p_i(j)$ represents $i$'s {\it utility} for $j$. 
A {\it preference profile} $P$ is the set of all agents' preferences.
We say that a preference is {\it binary} when $(p_i(j))_{i,j\in A}\in \{0,1\}$.

An {\it arrangement} of $A$ on a seat graph $G=(V, E)$ is a bijection $\pi: A \rightarrow V$.
For a vertex $v \in V$, let $N_G(v) = \{u \in V \mid \{u, v\}\in E$\} be the set of neighbors of $v$.
For an arrangement $\pi$, we define agent $i$'s {\it utility} in $\pi$ as $U_i(\pi) = \sum_{\substack{v \in N_{G}(\pi (i)) } }p_i(\pi ^{-1}(v))$, namely, the sum of $i$'s utility for the agents who are sitting next to $i$ in $\pi$.
For an arrangement $\pi$ and two agents $i, j \in A$, let $\pi_{ij}$ be the arrangement obtained from $\pi$ by swapping $\pi(i)$ and $\pi(j)$, namely, $\pi_{ij}(i)=\pi(j)$, $\pi_{ij}(j)=\pi(i)$, and $\pi_{ij}(a)=\pi(a)$ for $a \in A \setminus \{ i, j \}$.
We say that $i$ {\it envies} $j$ in $\pi$ if $U_i(\pi_{ij}) > U_i(\pi)$ holds. 
If $i$ envies someone in $\pi$, we say that $i$ {\em has an envy} in $\pi$.
An arrangement $\pi$ is {\it envy-free} if no agent has an envy in $\pi$.
We say that $(i,j)$ is a {\it blocking pair} if $i$ and $j$ envies each other. 
An agent $i$ is called a {\em blocking agent} if there is a blocking pair $(i, j)$ for some agent $j$.
An arrangement $\pi$ is {\it exchange-stable} if there is no blocking pair in $\pi$.

For a directed graph $G$, a {\em directed Hamiltonian path} is a directed path in $G$ that passes through each vertex exactly once.
We may sometimes omit the word ``directed'' when it causes no confusion.
Let us denote by {\em DHP} the problem of determining if a given graph has a Hamiltonian path.
It is well known that DHP is NP-complete \cite{5}.
Let {\em DHP*} be a restriction of DHP where an input graph has a special vertex that has no outgoing edge and has incoming edges from all the other vertices.
We use the following fact in our hardness proofs.

\begin{proposition}
DHP* is NP-complete.
\end{proposition}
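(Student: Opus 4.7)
The plan is to prove NP-completeness by a direct reduction from the standard DHP problem, which the excerpt already cites as NP-complete. Membership in NP is immediate: a candidate Hamiltonian path is a polynomial-size certificate that can be verified in polynomial time by checking that it visits each vertex exactly once and that every consecutive pair is a directed edge.

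For hardness, given an instance $G = (V, E)$ of DHP, I would construct an instance $G' = (V', E')$ of DHP* as follows. Introduce a fresh vertex $v^*$ and set
\[
V' = V \cup \{v^*\}, \qquad E' = E \cup \bigl\{(v, v^*) : v \in V\bigr\}.
\]
By construction, $v^*$ has no outgoing edges and has an incoming edge from every other vertex, so $G'$ is a valid DHP* instance, and the construction is clearly polynomial time.

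The remaining task is to verify correctness. If $G$ has a directed Hamiltonian path $v_1 \to v_2 \to \cdots \to v_n$, then appending $v^*$ at the end yields $v_1 \to v_2 \to \cdots \to v_n \to v^*$, which is a Hamiltonian path in $G'$ since the edge $(v_n, v^*)$ was added. Conversely, suppose $G'$ has a Hamiltonian path. Because $v^*$ has no outgoing edges, $v^*$ cannot appear anywhere except at the final position of the path; hence the path has the form $v_1 \to \cdots \to v_n \to v^*$ with $\{v_1, \ldots, v_n\} = V$. Deleting the last vertex and its incoming edge leaves a directed Hamiltonian path in $G$, since all edges among $v_1, \ldots, v_n$ in $G'$ are inherited from $E$.

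There is no real obstacle here; the only point requiring a brief argument is observing that $v^*$ must lie at the end of any Hamiltonian path in $G'$, which follows immediately from the absence of outgoing edges from $v^*$. This yields the equivalence of the two instances and completes the reduction.
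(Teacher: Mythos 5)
Your proposal is correct and matches the paper's proof essentially verbatim: the same construction $V' = V \cup \{v^*\}$, $E' = E \cup \{(v,v^*) : v \in V\}$, the same forward direction by appending $v^*$, and the same converse argument that $v^*$ must be the last vertex since it has no outgoing edge. No differences worth noting.
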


\begin{proof}
Membership in NP is obvious.
We prove the NP-hardness by a reduction from DHP.
Let $G=(V, E)$ be an instance of DHP.
We construct the directed graph $G' = (V', E')$ where $V' = V \cup \{ v^* \}$ and $E' = E \cup \{ (v, v^*) \mid v \in V\}$ as an input to DHP*. 
It is easy to see that $G'$ satisfies the condition of DHP*.

Suppose that there is a Hamiltonian path $H$ in $G$.
Then, a path obtained by appending $v^*$ to the tail of $H$ is a Hamiltonian path in $G'$.
Conversely, if there is a Hamiltonian path $H'$ in $G'$, its last vertex must be $v^*$ because it has no outgoing edge.
Therefore, removing $v^*$ from $H'$ would give us a Hamiltonian path in $G$.
\end{proof}

In the following sections, when we use DHP{*} as a reduction source, vertices are denoted $v_1, v_2, \ldots, v_n$, and without loss of generality we assume that $v_n$ has no outgoing edge.

\section{Envy-freeness}

\begin{theorem}
Deciding whether an envy-free arrangement on a $2\times m$ grid graph exists is NP-complete even for binary preferences. 
\end{theorem}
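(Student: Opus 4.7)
The plan is to prove NP-hardness by reduction from DHP*, whose NP-completeness is established in Proposition~1. Membership in NP is immediate: an arrangement $\pi$ is a polynomial-size witness, and envy-freeness can be verified in polynomial time by comparing $U_i(\pi)$ with $U_i(\pi_{ij})$ for every pair of agents $i,j$. The reduction takes an instance $G=(V,E)$ of DHP* with $V=\{v_1,\ldots,v_n\}$ and $v_n$ a sink, and produces a $2 \times m$ grid (with $m$ polynomial in $n$) together with a set of agents carrying binary preferences, such that an envy-free arrangement exists if and only if $G$ admits a directed Hamiltonian path.

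The construction would introduce $n$ \emph{vertex agents} $a_1,\ldots,a_n$ mirroring $V$, together with one or more classes of \emph{dummy agents} that pad the grid and shape the combinatorics. The vertex agents' preferences encode the edges of $G$ in a binary way, with the directionality chosen so that the sink $v_n$ yields an agent $a_n$ that likes nobody. Dummies will have all-zero preferences and therefore be trivially envy-free no matter where they sit, so the whole problem reduces to controlling envy among vertex agents. The target envy-free arrangement lines the vertex agents along the top row of the grid in the order of a directed Hamiltonian path $v_{\sigma(1)} \to \cdots \to v_{\sigma(n-1)} \to v_n$, anchoring $a_n$ at a grid corner so that its degree-$2$ seat matches its lack of outgoing edges.

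With the construction in place, I would prove the two directions of the equivalence. The \emph{forward direction} is a verification: given a Hamiltonian path, check that every vertex agent's utility is already as large as any swap would make it. Swaps into the bottom row are killed because dummies contribute $0$; swaps to another top-row seat are killed by the edge-encoded preferences, possibly combined with a symmetrization of likes or with auxiliary preferences of boundary dummies introduced specifically to tie down the ends of the path. The \emph{backward direction} extracts a Hamiltonian path from any envy-free arrangement by showing (i) that vertex agents must occupy a contiguous block on a single row, (ii) that $a_n$ must sit at an endpoint of that block because it alone likes nobody, and (iii) that each consecutive pair of vertex agents in the block corresponds to a directed edge of $G$; reading off the block then gives the desired Hamiltonian path ending at $v_n$.

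The hardest part, singled out in the authors' introduction, is that unlike a path, a $2 \times m$ grid contains $4$-cycles and has degree-$3$ interior seats, which open two failure modes absent from the path case: a vertex agent can swap vertically with the dummy directly below it, and the vertex agents could collectively wrap around a $4$-cycle so that their induced ordering encodes a Hamiltonian cycle rather than a Hamiltonian path. The first is controlled by giving every dummy utility $0$ towards every vertex agent, so that descending to the bottom row is never attractive; the second is controlled by the sink $a_n$, which cannot like anyone and therefore must anchor one end of the block at a degree-$2$ corner, breaking cyclic symmetry. I expect the most intricate step to be part (i) of the backward direction, where a careful case analysis must rule out envy-free configurations in which vertex agents are split across the two rows of the grid or interleaved with dummies — precisely the configurations enabled by the grid's cycle structure.
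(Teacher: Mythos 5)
Your high-level frame (reduction from DHP*, NP membership by verification, forward/backward equivalence) matches the paper, but the construction you sketch has two gaps that the paper's actual gadget exists precisely to close, and your proposed fixes for the grid-specific difficulties do not work. First, with a single agent per vertex whose binary preference marks all out-neighbors of $v_i$, the intended arrangement need not be envy-free: an interior seat of a $2\times m$ grid has degree $3$, so if two out-neighbors of $v_i$ happen to be placed at distance two anywhere in the grid, the agent for $v_i$ envies whoever sits between them (utility $2$ versus $1$). All-zero dummies do nothing to prevent this. The paper avoids it by replacing each vertex with a chain of three agents $x_i \to y_i \to z_i$ (only $z_i$ carries the edge information) and adding dedicated blockers $a_i,b_i,c_i$ that are forced, on pain of their own envy, to sit next to $x_i,y_i,z_i$; this saturates the degree of every chain vertex so that $z_i$ physically cannot be adjacent to two liked agents, and symmetrically $x_i$ cannot absorb two $z_j$'s. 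Without some such degree-saturation device your forward direction fails and your backward direction cannot conclude that the extracted relation is a function with in-degree at most one.

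Second, your mechanism for ruling out cycles --- ``the sink $a_n$ must anchor one end of the block, breaking cyclic symmetry'' --- is not sound. The sink only guarantees that the functional digraph read off from the arrangement has at most one path component; it does not exclude the other vertices from splitting into a path \emph{plus several disjoint directed cycles}, which is exactly the failure mode the authors flag as the new difficulty on grids. The paper kills cycles geometrically: a closed chain embedded in a $2\times m$ grid must contain two consecutive bends, and at such a bend the forced blocker ($c_{s_k}$, which must sit next to $z_{s_k}$) has no seat left, a contradiction. Nothing in your proposal plays this role, and the places where one would need it are exactly the ones you leave open (``possibly combined with a symmetrization of likes or with auxiliary preferences of boundary dummies''). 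Also note that the paper never needs, and does not prove, your claim (i) that vertex agents occupy a contiguous block on a single row; the rigidity comes entirely from the forced adjacencies of the blockers. As written, the proposal is a different and incomplete route, not a variant proof.
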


\begin{proof}
Membership in NP is obvious.
We prove the hardness by a reduction from DHP*. Let $G=(V, E)$ be an instance of HP and let $n=|V|$. For each vertex $v_i \in V (1 \leq i \leq n)$, we introduce six agents $x_i$, $y_i$, $z_i$, $a_i$, $b_i$, and $c_i$, hence there are $6n$ agents in total. Their preference profile $P$ is given in Table \ref{table:1}. 
In the table, each row represents preference of an agent given at the leftmost column.
For example, $p_{x_i}(y_i) = 1$ and $p_{x_i}(s) = 0$ for $s\in A\setminus \{y_i\}$.
Note that $p_{z_n}(a)=0$ for any $a$ because $v_n$ has no outgoing edge. 
Finally, we set $m = 3n$, i.e., our seat graph is a $2 \times (3n)$ grid graph.

\begin{table}[t]
  \centering
  
    \begin{tabular}{c|c|c}\hline\hline
       & 1 & 0 \\ \hline
      $x_i(1 \leq i \leq n)$ & $y_i$ & other agents \\ \hline
      $y_i(1 \leq i \leq n)$ & $z_i$ & other agents \\ \hline
      $z_i(1 \leq i \leq n)$ & $x_p ((v_i, v_p)\in E)$ & other agents \\ \hline
    \end{tabular}
    
    \vspace{3mm}

  \centering
    \begin{tabular}{c|c|c}\hline\hline
       & 1 & 0 \\ \hline
      $a_i(1 \leq i \leq n)$ & $x_i$ & other agents \\ \hline
      $b_i(1 \leq i \leq n)$ & $y_i$ & other agents \\ \hline
      $c_i(1 \leq i \leq n)$ & $z_i$ & other agents \\ \hline
     
    \end{tabular}
    \caption{Preference profile $P$}\label{table:1}
    
\end{table}

We first argue that if there is a Hamiltonian path in $G$, $P$ has an envy-free arrangement on a $2 \times m$ grid graph.
Suppose that $G$ has a Hamiltonian path $H$.
Recall that $v_n$ has no outgoing edge, so $v_n$ must be the last vertex of $H$.
By renaming other vertices, we assume without loss of generality that $H$ is ordered according to the indices of vertices, i.e., $H:= v_1, v_2, \ldots, v_n$.
Corresponding to $H$, define the seat arrangement $\pi$ as shown in Figure \ref{fig:1}. 

\begin{figure}[htbp]
\begin{center}
\includegraphics[width=120mm]{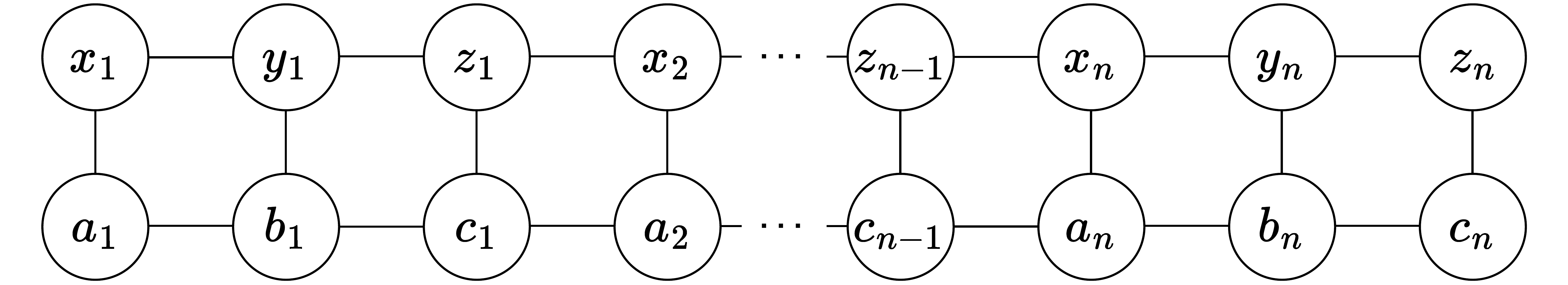}
\caption{The seat arrangement $\pi$}\label{fig:1}
\end{center}
\end{figure}

We will show that $\pi$ is an envy-free arrangement. First, observe from the definition of $P$ that the maximum possible utility for each of $x_i$, $y_i$, $a_i$, $b_i$, and $c_i$ is 1, and all these agents get utility 1 in $\pi$. Hence these $5n$ agents have no envy. 
Next, $z_n$ has no envy because his utility is 0 no matter where he sits. 
Finally, $z_i (i \neq n)$ gets utility 1 in $\pi$, but he can never get more utility because there is no vertex (seat) that is adjacent to two $x_j$s.
Thus $z_i$ has no envy.
Therefore, nobody has envy and hence $\pi$ is envy-free.

For the opposite direction, assume that there is an envy-free arrangement $\pi$ of $P$ on a $2 \times m$ grid graph. We will show that $G$ has a Hamiltonian path. First, we claim that for any $i$, $a_i$ is placed next to $x_i$ in $\pi$, as otherwise $a_i$ gets utility 0 now and so envies an agent who is located next to $x_i$.
For the same reason, $b_i$ is placed next to $y_i$, $c_i$ is placed next to $z_i$, $x_i$ is placed next to $y_i$, and $y_i$ is placed next to $z_i$.
Next, let us consider $z_{i} (i \neq n)$. Since $v_{i}$ has at least one outgoing edge in $G$, $z_i$ is placed next to some $x_p$ such that $(v_i, v_p)\in E$, as otherwise, $z_{i}$ envies $x_{p}$'s neighbor.
However, $z_{i}$ can never have two $x_{j}$s as his neighbor because the maximum degree of the seat graph is three and $y_i$ and $c_i$ must be neighbors of $z_i$ as mentioned above. Therefore, $z_i (i \neq n)$ has exactly one neighbor $x_p$ such that $(v_i, v_p)\in E$.
Similarly, since $x_i$ must have $y_i$ and $a_i$ as his neighbors, $x_i$ can have at most one $z_j$ as a neighbor.

Now, from $\pi$, construct the directed graph $G'=(V', E')$ where $V'= \{ u_1, u_2, \ldots, u_n \}$ and there is an arc $(u_{i}, u_{j}) \in E'$ if and only if $z_i$ is placed next to $x_j$ in $\pi$. From the above observations, we know that each $u_{i} (i \neq n)$ has exactly one outgoing edge and each $u_{i} (1 \leq i \leq n)$ has at most one incoming edge. These facts imply that $G'$ consists of at most one directed path and some number of directed cycles. In the following, we will show that there is no cycle. 
Assume on the contrary that there exists a cycle $u_{s_{1}}, u_{s_{2}}, \ldots, u_{s_{t}}, u_{s_{1}}$ of length $t$. Then, in the seat arrangement $\pi$, $z_{s_{i}}$ is placed next to $x_{s_{i+1}}$ for $1 \leq i \leq t-1$ and $z_{s_{t}}$ is placed next to $x_{s_{1}}$, so there exists a cycle $x_{s_{1}}, y_{s_{1}}, z_{s_{1}}, x_{s_{2}}, y_{s_{2}}, z_{s_{2}}, \dots, x_{s_{t}}, y_{s_{t}}, z_{s_{t}}, x_{s_{1}}$ in the seat graph.
Since the seat graph is a $2 \times m$ grid, there must be two consecutive bends as depicted in Cases (i), (ii) and (iii) of Figure \ref{fig:2}. Case (i) is impossible because $c_{s_{k}}$ is not placed next to $z_{s_{k}}$ as opposed to the above observation. For the same reason, neither Case (ii) nor (iii) is possible. Hence we can exclude the existence of a cycle and conclude that $G'$ consists of one path $u_{s_{1}}, u_{s_{2}}, \ldots, u_{s_{n}}$.
By construction of $G'$, for each $i (1\leq i \leq n-1$), $z_{s_{i}}$ is placed next to $x_{s_{i+1}}$ in $\pi$. Thus, by the aforementioned property, there is an arc $(v_{s_{i}}, v_{s_{i+1}})$ in $G$ and hence there is a Hamiltonian path $v_{s_{1}}, v_{s_{2}}, \ldots, v_{s_{n}}$. This completes the proof.

\begin{figure}[htbp]
\begin{center}
\includegraphics[width=120mm]{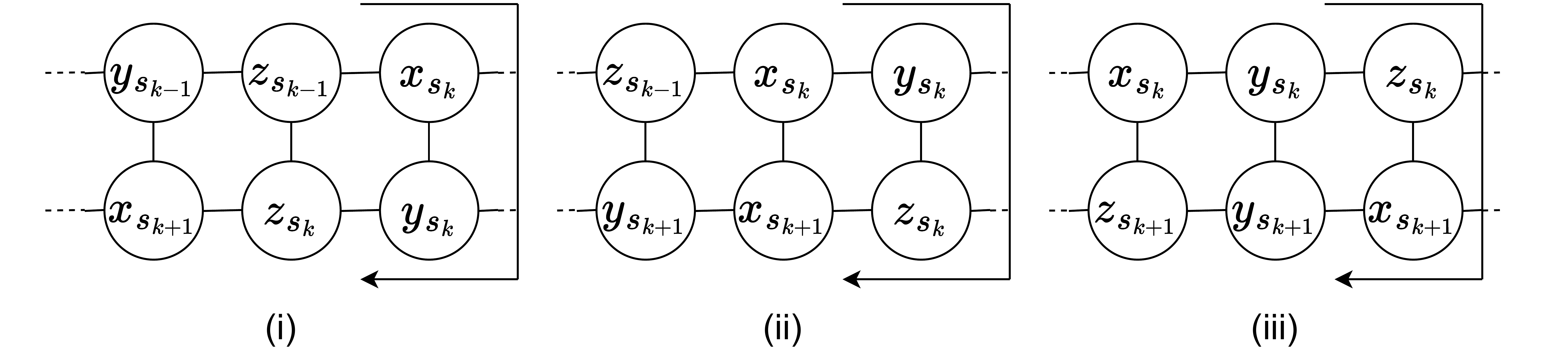}
\caption{Three possible configurations of the seat arrangement $\pi$ when there is a cycle in $G'$}\label{fig:2}
\end{center}
\end{figure}

\end{proof}

\begin{theorem}
Deciding whether an envy-free arrangement on an $\ell \times m$ grid graph exists is NP-complete for $\ell \geq 3$ even for binary preferences.
\end{theorem}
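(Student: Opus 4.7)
The plan is to reduce from DHP$^*$ as in Theorem~1, enriching the gadget so that it continues to force the spine structure in a grid whose interior vertices have degree~$4$. For each vertex $v_i$ of the input DHP$^*$ instance I would introduce the same three spine agents $x_i, y_i, z_i$ as in Theorem~1 but \emph{two} dummies per spine agent: $a_i, a'_i$ each with preference $1$ for $x_i$, $b_i, b'_i$ each with preference $1$ for $y_i$, and $c_i, c'_i$ each with preference $1$ for $z_i$. The spine preferences are unchanged from Table~1. For $\ell = 3$ the seat graph is a $3 \times 3n$ grid, matching the $9n$ agents; for $\ell \geq 4$ I would add neutral padding agents (preference $0$ from and to every other agent) until the total agent count equals $\ell m$ for a suitable~$m$.

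For the forward direction, given a Hamiltonian path $v_1, v_2, \ldots, v_n$ in $G$, I would place the spine agents along a middle row in the order $x_1, y_1, z_1, x_2, \ldots, z_n$, put the two dummies of each spine agent in the cells immediately above and below it, and fill the remaining cells with neutrals when $\ell \geq 4$. A direct check, in the same spirit as the $\ell = 2$ case, shows that every agent attains its maximum possible utility, so the arrangement is envy-free.

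For the reverse direction, the standard chain of deductions goes through in any envy-free arrangement: each dummy must sit next to its preferred spine agent, each $x_i$ next to $y_i$, each $y_i$ next to $z_i$, and each $z_i$ (for $i \neq n$) next to some $x_p$ with $(v_i, v_p) \in E$. With two dummies per spine agent and maximum grid degree $4$, each $y_i$ has four forced neighbors (so $y_i$ must lie at an interior grid vertex with no free slot), while each $x_i$ and each $z_i$ has three forced neighbors and therefore at most one free slot. The directed graph $G'$ with arc $(u_i, u_j)$ iff $z_i$ sits next to $x_j$ is then defined and analyzed exactly as in Theorem~1, so it suffices to rule out cycles in~$G'$.

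The cycle elimination is the main obstacle, and I expect the argument to split by $\ell$. For $\ell = 3$, the interior of a $3 \times m$ grid is exactly the middle row, which is a path; since every $y_i$ in a hypothetical spine cycle must sit at an interior vertex, the cycle would have to live entirely inside a path, which is impossible. For $\ell \geq 4$, I would prove a geometric counting lemma: in any simple cycle $C$ of length $L \geq 6$ in the $\ell \times m$ grid, the number of distinct non-cycle grid cells adjacent to $C$ is strictly less than $2L$. The heuristic is that each cycle vertex has at most $\deg - 2 \leq 2$ non-cycle grid neighbors, yielding at most $2L$ slots in total, but any simple cycle of length at least $6$ either contains a chord (as in the $2 \times k$ perimeter, where vertical chords strictly reduce the slot count) or surrounds interior cells that are adjacent to multiple cycle vertices (as in an $a \times b$ perimeter with $a, b \geq 3$, where inner-perimeter cells are shared among several cycle cells). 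A spine cycle of length $3t$ would require $6t = 2L$ distinct non-cycle cells to host its dummies, contradicting the lemma. Turning this intuition into a uniform proof that covers every possible cycle shape, together with handling the extra freedom coming from neutral padding when $\ell \geq 4$, is the most delicate part of the proof.
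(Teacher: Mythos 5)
Your construction is essentially the paper's: two anchor agents per spine agent (the paper calls them $a_i,d_i$ for $x_i$, $b_i,e_i$ for $y_i$, $c_i,f_i$ for $z_i$), zero-preference dummies to pad the agent count to $\ell\cdot 3n$, and the same chain of forced adjacencies in the reverse direction. The forward direction and the derivation of the auxiliary digraph $G'$ also match (one small imprecision: $z_i$ does not attain its \emph{maximum possible} utility in $\pi$; rather, no swap can increase it because no seat is adjacent to two $x_j$'s --- the same fix as in the $2\times m$ case).

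The genuine gap is the cycle-elimination step, which you yourself flag as unresolved. Your $\ell=3$ shortcut does not work as stated: only the $y_{s_i}$'s are forced to degree-$4$ (middle-row) vertices; the $x$'s and $z$'s on a hypothetical cycle may sit in the top or bottom row, so the cycle need not ``live entirely inside a path.'' For $\ell\geq 4$ your global counting lemma (fewer than $2L$ distinct off-cycle cells adjacent to a length-$L$ cycle) is plausible but unproven, and proving it uniformly over all rectilinear simple cycles is exactly the hard part you are deferring. The paper avoids the global lemma entirely with a local argument that works uniformly for all $\ell\geq 3$: any cycle in a grid must contain a bend, i.e.\ a corner vertex, say $x_{s_k}$, flanked on the cycle by $z_{s_{k-1}}$ and $y_{s_k}$ meeting at a right angle. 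Then $z_{s_{k-1}}$ still needs $y_{s_{k-1}}, c_{s_{k-1}}, f_{s_{k-1}}$ among its at most three remaining grid neighbors, and $y_{s_k}$ still needs $z_{s_k}, b_{s_k}, e_{s_k}$ among its at most three remaining grid neighbors; because of the right angle these two sets of candidate seats share the diagonal cell, leaving only five seats for six distinct agents --- a contradiction. Replacing your global lemma by this local bend argument (checking the few symmetric bend configurations) closes the gap and also removes the need to treat $\ell=3$ separately.
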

    
\begin{proof}
Membership in NP is obvious.
We prove the hardness by a reduction from DHP*. Let $G = (V, E)$ be an instance of DHP* and let $n = |V|$. For each vertex $v_i \in V (1 \leq i \leq n)$, we introduce nine agents $x_i$, $y_i$, $z_i$, $a_i$, $b_i$, $c_i$, $d_i$, $e_i$, and $f_i$. 
We further add {\it dummy} agents $D_i$ $(1 \leq i \leq 3n\ell-9n)$. 
Therefore, there are $3n\ell$ agents in total. 
Note that when $\ell=3$, there are no dummy agents.
Their preference profile $P$ is given in Table \ref{table:2}. Finally, we set $m = 3n$, i.e., our graph is an ${\ell} \times (3n)$ grid graph.

\begin{table}[t]
    \centering
    
    \begin{tabular}{c|c|c}\hline\hline
       & 1 & 0 \\ \hline
      $x_i(1 \leq i \leq n)$ & $y_i$ & other agents \\ \hline
      $y_i(1 \leq i \leq n)$ & $z_i$ & other agents \\ \hline
      $z_i(1 \leq i \leq n)$ & $x_p ((v_i, v_p)\in E)$ & other agents \\ \hline
    \end{tabular}

  \vspace{3mm}  
  
    \centering
    \begin{tabular}{c|c|c}\hline\hline
       & 1 & 0 \\ \hline
      $a_i(1 \leq i \leq n)$ & $x_i$ & other agents \\ \hline
      $b_i(1 \leq i \leq n)$ & $y_i$ & other agents \\ \hline
      $c_i(1 \leq i \leq n)$ & $z_i$ & other agents \\ \hline
      $d_i(1 \leq i \leq n)$ & $x_i$ & other agents \\ \hline
      $e_i(1 \leq i \leq n)$ & $y_i$ & other agents \\ \hline
      $f_i(1 \leq i \leq n)$ & $z_i$ & other agents \\ \hline
    \end{tabular}
    \vspace{3mm}

    \centering
    \begin{tabular}{c|c}\hline\hline
        & 0 \\ \hline
      $D_i(1 \leq i \leq 3n\ell-9n)$ & all agents \\ \hline
    \end{tabular}
    \caption{Preference profile $P$}\label{table:2}

\end{table}

We first argue that if there is a Hamiltonian path in $G$, $P$ has an envy-free arrangement on an $\ell \times m$ grid graph. Suppose that $G$ has a Hamiltonian path $H$. Recall that $v_n$ has no outgoing edge, so $v_n$ must be the last vertex of $H$. By renaming other vertices, we assume without loss of generality that $H$ is ordered according to the indices of vertices, i.e., $H:= v_1, v_2, \ldots, v_n$.
Corresponding to $H$, define the seat arrangement $\pi$ as shown in Figure \ref{Fig:3}.

\begin{figure}[htbp]
\begin{center}
\includegraphics[width=100mm]{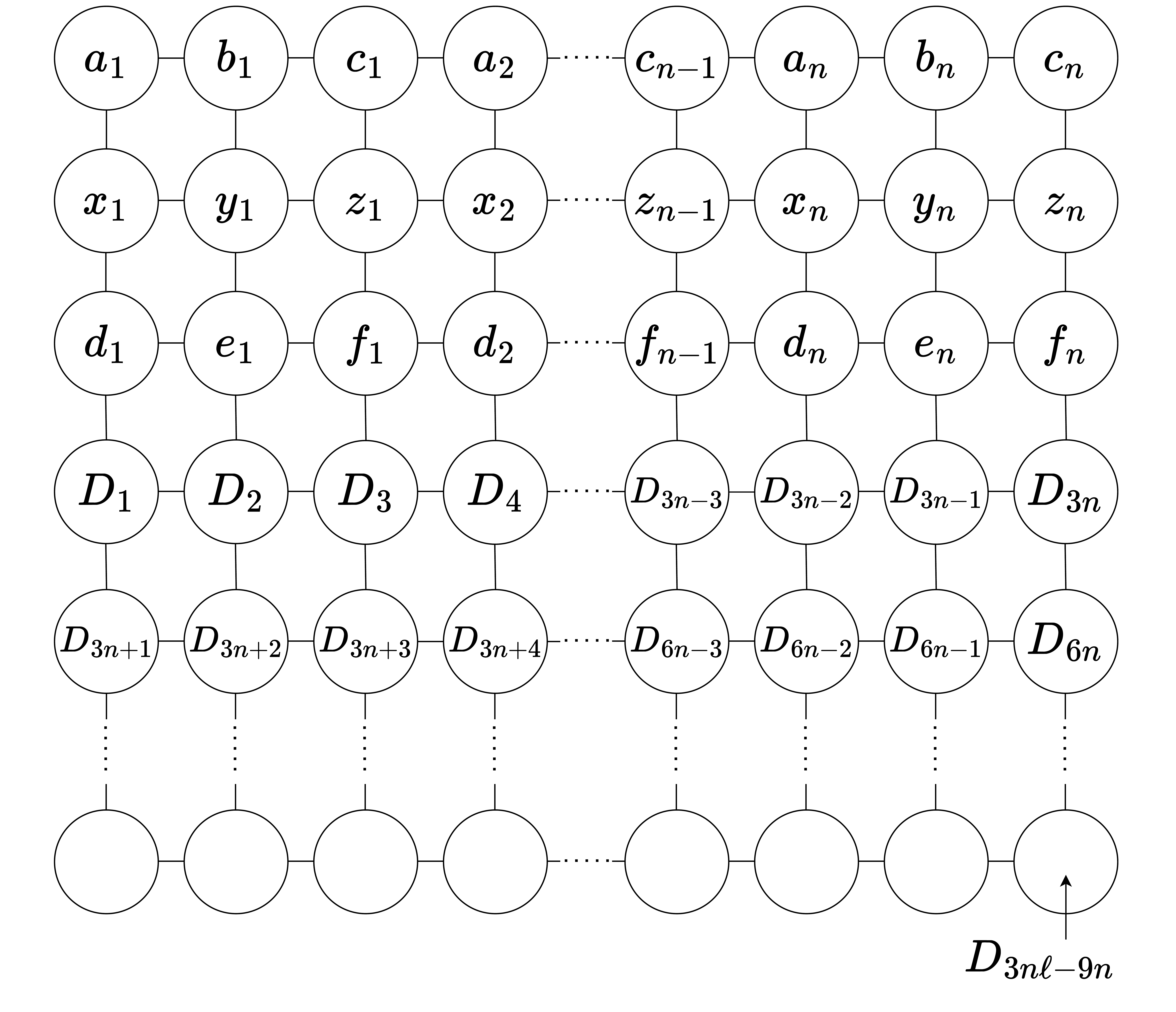}
\caption{The seat arrangement $\pi$}\label{Fig:3}
\end{center}
\end{figure}

We will show that $\pi$ is an envy-free arrangement.
First, observe from the definition of $P$ that the maximum possible utility for each of $x_i$, $y_i$, $a_i$, $b_i$, $c_i$, $d_i$, $e_i$, and $f_i$ is $1$, and all these agents get utility $1$ in $\pi$. 
Hence these $8n$ agents have no envy. 
Next, $z_n$ has no envy because his utility is 0 no matter where he sits.
Furthermore, $z_i (i \neq n)$ gets utility $1$ in $\pi$, but he can never get more utility because there is no vertex that is adjacent to more than one $x_j$.
Thus $z_i$ has no envy.
Finally, $D_i (1 \leq i \leq 3n\ell-9n)$ has no envy for the same reason for $z_n$.
Therefore, nobody has envy and hence $\pi$ is an envy-free arrangement.

For the opposite direction, assume that there is an envy-free arrangement $\pi$ of $P$ on an $\ell \times m$ grid graph. We will show that $G$ has a Hamiltonian path. First, we claim that for any $i$, $a_i$ and $d_i$ are placed next to $x_i$ in $\pi$, as otherwise they get utility $0$ now and so envy an agent who is located next to $x_i$ in $\pi$. 
For the same reason, $b_i$ and $e_i$ are placed next to $y_i$, $c_i$ and $f_i$ are placed next to $z_i$, $x_i$ is placed next to $y_i$, and $y_i$ is placed next to $z_i$.
Next, let us consider $z_i (i \neq n)$. Since $v_i$ has at least one outgoing edge in $G$, $z_i$ is placed next to some $x_p$ such that $(v_i, v_p) \in E$, as otherwise, $z_i$ envies $x_p$'s neighbor. However, $z_i$ can never have more than one $x_p$ as his neighbor because the maximum degree of the seat graph is four and $y_i$, $c_i$, and $f_i$ must be placed next to $z_i$. Therefore, $z_i (i \neq n)$ has exactly one neighbor $x_p$ such that $(v_i, v_p) \in E$. Similarly, since $x_i$ must have $y_i$, $a_i$, and $d_i$ as his neighbors, $x_i$ can have at most one $z_j$ as his neighbor.

Now, from $\pi$, construct the directed graph $G'=(V', E')$ where $V'= \{ u_1, u_2, \ldots, u_n \}$ and there is an arc $(u_{i}, u_{j}) \in E'$ if and only if $z_i$ is placed next to $x_j$ in $\pi$. From the above observations, we know that each $u_{i} (i \neq n)$ has exactly one outgoing edge and each $u_{i} (1 \leq i \leq n)$ has at most one incoming edge. These facts imply that $G'$ consists of at most one directed path and some number of directed cycles. In the following, we will show that there is no cycle. On the contrary, assume that there exists a cycle $u_{s_{1}}, u_{s_{2}}, \ldots, u_{s_{t}}, u_{s_{1}}$ of length $t$. 
Then, in the seat arrangement $\pi$, $z_{s_{i}}$ is placed next to $x_{s_{i+1}}$ for $1 \leq i \leq t-1$ and $z_{s_{t}}$ is placed next to $x_{s_{1}}$, so there exists a cycle $x_{s_{1}}, y_{s_{1}}, z_{s_{1}}, x_{s_{2}}, y_{s_{2}}, z_{s_{2}}, \dots, x_{s_{t}}, y_{s_{t}}, z_{s_{t}}, x_{s_{1}}$ in the seat graph.
This implies that there exists a vertex along this cycle where the cycle changes direction, as depicted in Cases (i), (ii) and (iii) of Figure \ref{fig:4}. 
Let us consider Case (i). Recall that $y_{s_{k-1}}$, $c_{s_{k-1}}$ and $f_{s_{k-1}}$ must be neighbors of $z_{s_{k-1}}$, so they are placed at vertices 1, 2, and 3.
Similarly, since $z_{s_{k}}$, $b_{s_{k}}$, and $e_{s_{k}}$ must be neighbors of $y_{s_{k}}$, they are placed at vertices 3, 4, and 5.
However, it is impossible that these six people are placed at five vertices.
Hence we can exclude Case (i).
Cases (ii) and (iii) can also be excluded in the same manner. 
Hence we can exclude the existence of a cycle and conclude that $G'$ consists of one path $u_{s_{1}}, u_{s_{2}}, \ldots, u_{s_{n}}$.
By construction of $G'$, for each $i (1\leq i \leq n-1$), $z_{s_{i}}$ is placed next to $x_{s_{i+1}}$ in $\pi$. Thus, by the aforementioned property, there is an arc $(v_{s_{i}}, v_{s_{i+1}})$ in $G$ and hence there is a Hamiltonian path $v_{s_{1}}, v_{s_{2}}, \ldots, v_{s_{n}}$. This completes the proof.
\begin{figure}[htbp]
\begin{center}
\includegraphics[width=150mm]{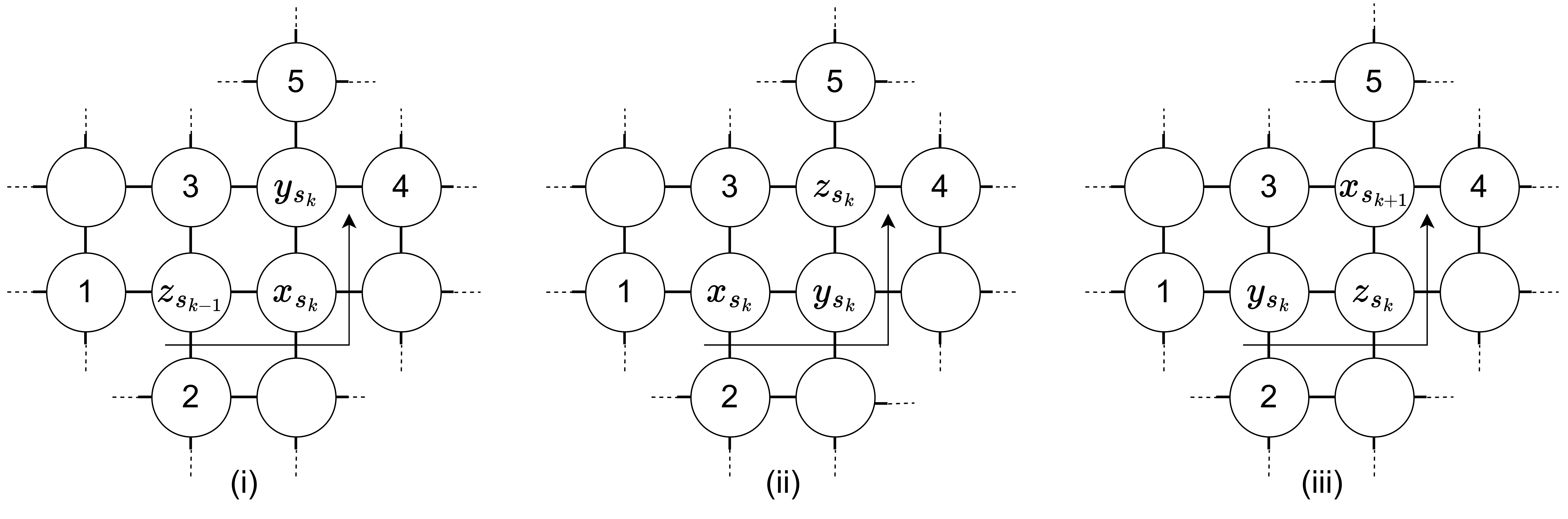}
\caption{Three possible configurations of the seat arrangement $\pi$ when there is a cycle in $G'$}\label{fig:4}
\end{center}
\end{figure}

\end{proof}

\section{Exchange-stability}

\begin{theorem}
Deciding whether an exchange-stable arrangement on a $2\times m$ grid graph exists is NP-complete. 
\end{theorem}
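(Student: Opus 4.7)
My plan is to adapt the reduction from DHP* used in Theorem~1, keeping the same skeleton of $6n$ agents $\{x_i, y_i, z_i, a_i, b_i, c_i\}$ on a $2\times 3n$ grid, but replacing the binary profile of Table~\ref{table:1} by a calibrated (non-binary, largely symmetric) preference profile. The point of the calibration is to upgrade each ``$\alpha$ envies $\beta$'' argument of Theorem~1 to a blocking-pair argument ``$(\alpha,\beta)$ is mutually improving,'' so that exchange-stability alone forces the same canonical structure that envy-freeness did. This is necessary because exchange-stability is strictly weaker than envy-freeness, and the backward direction of Theorem~1 does not transfer directly.

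For the forward direction, I would reuse the canonical arrangement $\pi$ of Figure~\ref{fig:1} and choose the utility values so that every agent attains his maximum possible utility in $\pi$; this immediately implies that $\pi$ is envy-free and hence exchange-stable. Concretely, I would symmetrize every ``forced'' preference in the reduction --- pendant pairs $p_{a_i}(x_i) = p_{x_i}(a_i)$, $p_{b_i}(y_i) = p_{y_i}(b_i)$, $p_{c_i}(z_i) = p_{z_i}(c_i)$, chain pairs $p_{x_i}(y_i) = p_{y_i}(x_i)$, $p_{y_i}(z_i) = p_{z_i}(y_i)$, and Hamiltonian-edge pairs $p_{z_i}(x_p) = p_{x_p}(z_i)$ for $(v_i,v_p)\in E$ --- and weight the pendant utilities substantially larger than the chain utilities so that disturbing a pendant-target pair dominates any utility calculation.

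The backward direction would then proceed in three phases, following the structure of Theorem~1. Phase~A: in any exchange-stable $\pi'$, each pendant $a_i, b_i, c_i$ is adjacent to its target $x_i, y_i, z_i$, and the chain adjacencies $x_i$--$y_i$ and $y_i$--$z_i$ hold. I argue by contraposition: if, say, $a_i$ is not adjacent to $x_i$ in $\pi'$, let $q$ be the occupant of a seat adjacent to $x_i$; the calibrated utilities should make the swap of $a_i$ and $q$ strictly improving for both, exhibiting a blocking pair. Phase~B: transfer the degree/chain analysis of Theorem~1 verbatim to conclude that each $z_i$ ($i\ne n$) has exactly one $x_p$-neighbor with $(v_i,v_p)\in E$ and each $x_i$ has at most one $z_j$-neighbor. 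Phase~C: exclude cycles in the induced directed graph $G'$ by the same grid-bend analysis of Figure~\ref{fig:2}, which still applies once pendant adjacencies are forced.

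The hard part will be Phase~A. For every possible type of replacement agent $q$ --- another pendant of the same or different index, a chain agent, and so on --- I must verify that the swap of $a_i$ and $q$ is strictly improving for $q$ as well as for $a_i$. This requires case analysis on the type of $q$ and on $a_i$'s current neighborhood, combined with a utility calibration that makes the swap gain positive for $q$ in each case while keeping $\pi$ itself free of blocking pairs. I expect this is where the non-binary values become essential and where most of the technical work lies; in particular, one must be careful that the symmetrization does not create unintended blocking pairs among chain agents within the canonical arrangement $\pi$.
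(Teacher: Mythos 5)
There is a genuine gap in your Phase~A, and it is exactly the point where exchange-stability differs from envy-freeness. To show that a mis-seated $a_i$ (not adjacent to $x_i$) yields a blocking pair, you need a \emph{partner} who strictly gains by moving to $a_i$'s current seat. You propose to take $q$, an occupant of a seat adjacent to $x_i$, and ``calibrate'' utilities so that the swap improves $q$ as well. But no fixed calibration can achieve this: $q$ may already be seated next to his own favorite (e.g.\ $q=y_i$, adjacent to both $x_i$ and $z_i$), and $a_i$'s seat may be adjacent to nobody $q$ cares about, so $q$ strictly loses by the swap no matter how you weight the symmetrized pendant and chain utilities. Symmetrizing $p_{a_i}(x_i)=p_{x_i}(a_i)$ does not help either, because the swap partner is not $x_i$ but whoever occupies a seat next to $x_i$, and that occupant's incentives are not controlled by your construction. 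So exchange-stability of $\pi'$ does not force the pendant and chain adjacencies under your profile, and the whole backward direction collapses.

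The paper closes this gap with a different device: it adds a ``pariah'' agent $s$ (plus buffer agents $t_1,t_2,t_3$, enlarging the grid to $2\times(3n+2)$), gives every main agent utility $-10$ toward $s$, $+3$ toward his favorite, $0$ toward the $t_j$, and crucially $-1$ toward \emph{all other} agents. In any exchange-stable arrangement $s$ must be isolated (surrounded only by $t_j$'s), hence has utility $0$ and would strictly gain by swapping into any non-isolated seat; conversely, any main agent not adjacent to his favorite has strictly negative utility and strictly gains by moving to $s$'s isolated seat. Thus $(a_i,s)$ is the blocking pair, with a partner whose willingness to swap is guaranteed globally rather than case by case. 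If you want to salvage your write-up, you should adopt some such universally-envious agent (or an equivalent mechanism that guarantees the second half of every blocking pair); the cycle-exclusion and Hamiltonian-path extraction in your Phases~B and~C are then essentially the paper's argument, modulo the bookkeeping for $s$, the $t_j$'s, and the possibility of one additional isolated agent.
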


\begin{proof}
Membership in NP is obvious.
We prove the hardness by a reduction from DHP*. Let $G = (V, E)$ be an instance of DHP* and let $n = |V|$. For each vertex $v_i \in V$, introduce six agents $x_i$, $y_i$, $z_i$, $a_i$, $b_i$, and $c_i$. Moreover, we add four agents $s$, $t_1$, $t_2$, and $t_3$. Therefore, there are $6n+4$ agents in total. Their preference profile $P$ is given in Table \ref{table:3}. Finally, we set $m = 3n+2$, i.e., our seat graph is a $2 \times (3n+2)$ grid graph.

\begin{table}[t]
    
    \centering
    
    \begin{tabular}{c|c|c|c|c}\hline\hline
       & $-10$ & $-1$ & $0$ & $3$ \\ \hline
      $x_i(1 \leq i \leq n)$ & $s$ & other agents & $t_1$, $t_2$, $t_3$ & $y_i$ \\ \hline
      $y_i(1 \leq i \leq n)$ & $s$ & other agents & $t_1$, $t_2$, $t_3$ & $z_i$ \\ \hline
      $z_i(1 \leq i \leq n-1)$ & $s$ & other agents & $t_1$, $t_2$, $t_3$ & $x_p$ $((v_i, v_p)\in E)$ \\ \hline
      $z_n$ & $s$ & other agents & $t_1$, $t_2$, $t_3$ & $c_n$  \\ \hline
    \end{tabular} 
    
\vspace{3mm}   

    \begin{tabular}{c|c|c|c|c}\hline\hline
       & $-10$ & $-1$ & $0$ & $3$ \\ \hline
      $a_i(1 \leq i \leq n)$ & $s$ & other agents & $t_1$, $t_2$, $t_3$ & $x_i$ \\ \hline
      $b_i(1 \leq i \leq n)$ & $s$ & other agents & $t_1$, $t_2$, $t_3$ & $y_i$ \\ \hline
      $c_i(1 \leq i \leq n)$ & $s$ & other agents & $t_1$, $t_2$, $t_3$ & $z_i$ \\ \hline
    \end{tabular}

\vspace{3mm}

    \centering
    \begin{tabular}{c|c|c}\hline\hline
       & $0$ & $1$ \\ \hline
      $s$ & $t_1$, $t_2$, $t_3$ & other agents   \\ \hline
      $t_i(1 \leq i \leq 3)$ & other agents & $s$  \\ \hline
    \end{tabular}
    \caption{Preference profile $P$}\label{table:3}
  
\end{table}

We first argue that if there is a Hamiltonian path in $G$, $P$ has an exchange-stable arrangement on a $2 \times m$ grid graph. Suppose that $G$ has a Hamiltonian path $H$. Recall that $v_n$ has no outgoing edge, so $v_n$ must be the last vertex of $H$. By renaming other vertices, we assume without loss of generality that $H$ is ordered according to the indices of vertices, i.e., $H:= v_1, v_2, \ldots, v_n$.
Corresponding to $H$, define the seat arrangement $\pi$ as shown in Figure \ref{fig:5}, where each agent's utility in $\pi$ is given above or below the vertex he is seated.

\begin{figure}[htbp]
\begin{center}
\includegraphics[width=140mm]{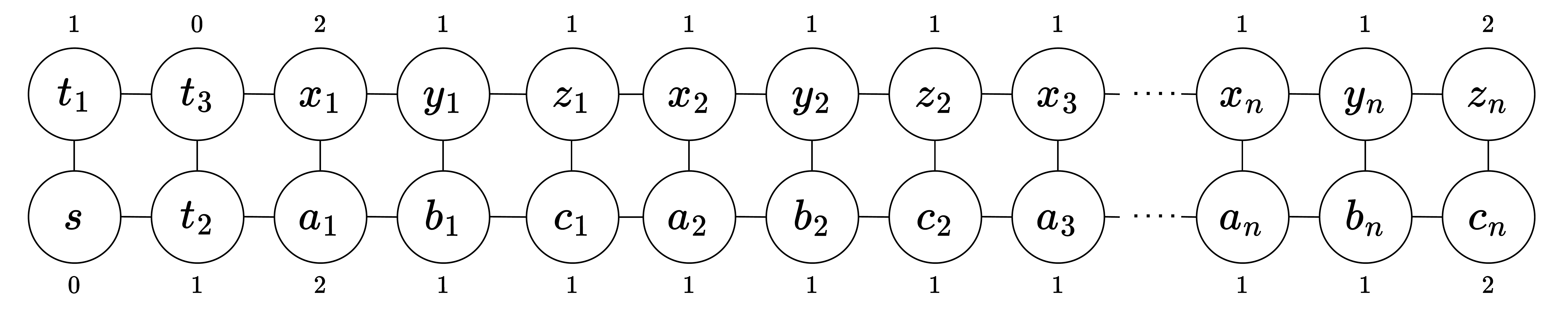}
\caption{The seat arrangement $\pi$}\label{fig:5}
\end{center}
\end{figure}

We will show that $\pi$ is an exchange-stable arrangement.
To facilitate the proof, we partition the agents into three groups, and for each group we show that no one can be a blocking agent.
For an agent $q \in \{ x_i, y_i, z_i, a_i, b_i, c_i \mid 1 \leq i \leq n \}$, $q$'s {\em favorite} is an agent to whom $q$'s utility is 3. 
For example, $a_3$'s favorite is $x_3$ and $y_5$'s favorite is $z_5$. Note that $z_i (1 \leq i \leq n-1)$ may have more than one favorite.

{\bf\boldmath Group 1. $s$, $t_1$, $t_2$, and $t_3$.}
First, note that both $t_1$ and $t_2$ already have maximum possible utility in $\pi$.
Hence they have no envy in $\pi$ and so cannot be a blocking agent. 
Agent $t_3$ has utility 0 now, and he can get positive utility only by sitting next to $s$. Hence he envies only $t_1$ and $t_2$.
However, neither $t_1$ nor $t_2$ has envy in $\pi$ as mentioned above, so $t_3$ cannot be a blocking agent.
Agent $s$'s neighbors are $t_1$ and $t_2$, but utilities of agents $x_i$, $y_i$, $z_i$, $a_i$, $b_i$, and $c_i$ ($1 \leq i \leq n$) for $t_1$ and $t_2$ are 0, so their utilities decrease to 0 by moving to $\pi(s)$, that is, they do not envy $s$. Hence $s$ cannot be a blocking agent.
Thus no one in Group 1 is a blocking agent.

{\bf\boldmath Group 2. $x_i$, $y_i$, $a_i$, $b_i$, and $c_i$ $(1 \leq i \leq n)$.}
Note that these agents have only one favorite and are now sitting next to the favorite.
Therefore, to increase the utility, an agent still has to sit next to his favorite after the swap.
For example, $y_2$'s favorite is $z_2$, so $y_2$ may envy only $c_2$, $x_3$, and $z_2$. But in this case, $y_2$ actually has no envy because his utility remains 1 by swapping a seat with any one of these three.
Checking in this way, we can see that Group 2 agents having indices between 2 and $n-1$ have no envy in $\pi$, and envies emanating from Group 2 agents are only the following: 
(i) $a_1$ envies $t_3$, (ii) $b_1$ envies $x_1$, (iii) $x_n$ envies $z_n$, (iv) $y_n$ envies $c_n$, and (v) $b_n$ envies $z_n$.
For (i), we already know that $t_3$ is not a blocking agent.
For (ii) and (iv), $x_1$ and $c_n$ are Group 2 agents and they are not listed in (i) through (v), so  they have no envy.
To deal with (iii) and (v), consider $z_n$.
$z_n$'s favorite is $c_n$ so he may envy only $b_n$ or $c_n$ (and so has no envy to $x_n$). 
Also, $z_n$'s utility decreases if he swaps a seat with $b_n$, so $z_n$ does not envy $b_n$.
Therefore, we can conclude that no one in Group 2 can be a blocking agent.

{\bf\boldmath Group 3. $z_i$ $(1 \leq i \leq n)$.}
We now have only to care about a blocking pair within Group 3 agents.
In the argument for Group 2, we already examined that $z_n$ does not envy any $z_j$.
For $z_i$ and $z_j$ ($1 \leq i, j \leq n-1$), they both have utility 1 now, but it does not increase by swapping the seats.
Thus no one in Group 3 is a blocking agent.

From the above discussion, there is no blocking pair in $\pi$, so $\pi$ is exchange-stable.

For the opposite direction, assume that there is an exchange-stable arrangement $\pi$ of $P$ on a $2\times m$ grid graph. We will show that $G$ has a Hamiltonian path. 
We say that an agent $p$ is {\em isolated} if $p$'s neighbors in $\pi$ are included in $\{t_1, t_2, t_3\}$.
First we show that $s$ is isolated.
Suppose not and let $p \not \in \{t_1, t_2, t_3\}$ be an agent who is seated next to $s$.
Since the maximum degree of the seat graph is three, there must be an agent $t \in \{t_1, t_2, t_3\}$ who is {\em not} seated next to $s$.
Note that $p$'s utility is now at most $-4$, but if he moves to $t$'s seat, the utility becomes at least $-3$. 
Similarly, $t$'s utility is $0$ now, but if he moves to $p$'s seat, it increases to $1$. 
These mean that $(p, t)$ is a blocking pair, contradicting the stability of $\pi$.
Therefore, $s$ is isolated.
Note that there can be at most one isolated agent in $A \setminus \{ s \}$, as shown in Figure \ref{fig:6}, where $a_1$ is isolated.

\begin{figure}[htbp]
\begin{center}
\includegraphics[width=100mm]{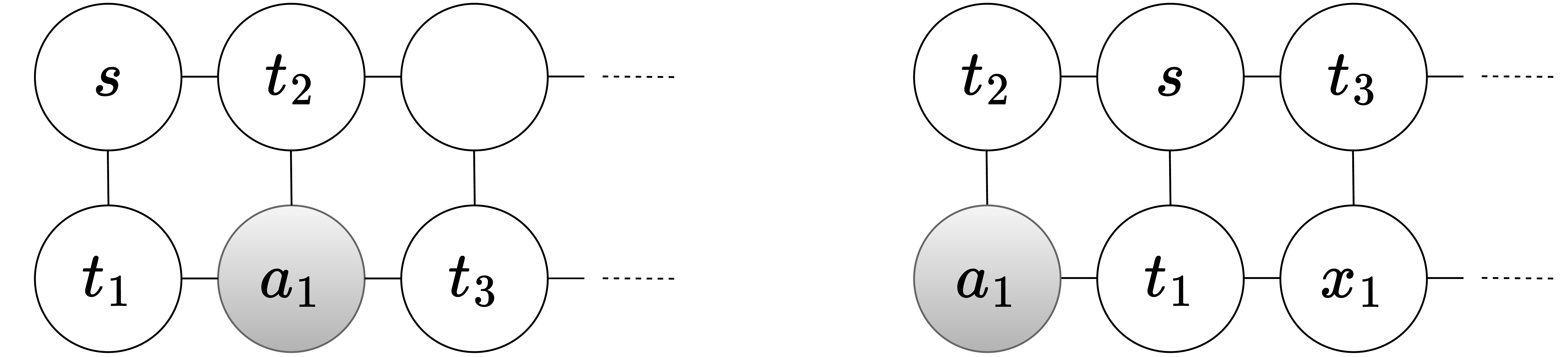}
\caption{Examples of $\pi$ including one isolated agent in $A \setminus \{ s \}$}\label{fig:6}
\end{center}
\end{figure}

Before proceeding with the proof, it would be helpful to explain the role of agent $s$.
Note that $s$'s utility in $\pi$ is 0 since he is isolated, but his utility becomes positive if he is not isolated.
Thus, he envies almost everyone.
In the following, we will show several conditions that an exchange-stable arrangement $\pi$ must satisfy, such as ``$x_i$ is placed next to $y_i$.''
To do so, we show that $x_i$ envies $s$ if this condition is not satisfied, implying that $(s, x_i)$ is a blocking pair.

Now, we continue the proof.
We claim that $x_i (1 \leq i \leq n)$ is not isolated for the following reason: 
If $x_i$ is isolated, then $a_i$ is not a neighbor of $x_i$.
Since $a_i$ is not isolated (as there can be at most one isolated agent other than $s$), $a_i$ has at least one neighbor to whom $a_i$'s utility is $-1$.
Thus $a_i$ can increase utility to 0 by moving to $s$'s seat.
Since $s$ is isolated, his utility is 0, but he can have a positive utility by moving to $a_i$'s seat since $a_i$ is not isolated.
Therefore $(a_i, s)$ is a blocking pair, contradicting the stability of $\pi$.
For the same reason, we can see that none of $y_i (1 \leq i \leq n)$, $z_i (1 \leq i \leq n)$, and $c_n$ are isolated.

Moreover, we claim that if $a_i (1 \leq i \leq n)$ is not isolated, $a_i$ is placed next to $x_i(1 \leq i \leq n)$, for if not, by swapping $a_i$ and $s$, $a_i$ can increase the utility from at most $-1$ to $0$ and $s$ can increase the utility from $0$ to at least $1$. Therefore, $(a_i, s)$ is a blocking pair, a contradiction. 
For the same reason, for $1 \leq i \leq n$, $b_i$ (resp.~$c_i$) is placed next to $y_i$ (resp.~$z_i$) if $b_i$ (resp.~$c_i$) is not isolated. 
Also, for the same reason, we can see that $x_i$ is placed next to $y_i$ and $y_i$ is placed next to $z_i$.

Then, let us consider $z_i (1 \leq i \leq n-1)$. Since the vertex $v_i$ has at least one outgoing edge in $G$, $z_i$ is placed next to some $x_p$ such that $(v_i, v_p) \in E$.
Suppose not. Then, by swapping $z_i$ and $s$, $z_i$ can increase the utility from at most $-1$ to $0$ and $s$ can increase the utility from $0$ to at least $1$. Therefore, $(z_i, s)$ is a blocking pair, a contradiction. 
We also show that $z_{i}$ can never have two $x_{p}$s as his neighbor. 
Assume on the contrary that $x_{p}$ and $x_q$ are $z_i$'s neighbors. 
Recall that $y_i$ must also be a neighbor of $z_i$, so $z_i$ is placed at a vertex of degree 3, as in Figure \ref{fig:7}.  
Recall that $x_i$ (reps. $y_p$ and $y_q$) must be neighbors of $y_i$ (reps. $x_p$ and $x_q$).
Also, if $b_i$ (reps. $a_p$ and $a_q$) is not isolated, he must be neighbors of $y_i$ (reps. $x_p$ and $x_q$).
As mentioned before, at most one of $b_i$, $a_p$, and $a_q$ may be isolated, but even so, the other five agents must be placed at four vertices 1 through 4. 
This is impossible, so we can exclude this configuration.
There are several more cases according whether how $y_i$, $x_p$, and $x_q$ are placed around $z_i$, but it is easy to see that none of them is possible by the same reasoning.

\begin{figure}[htbp]
\begin{center}
\includegraphics[width=70mm]{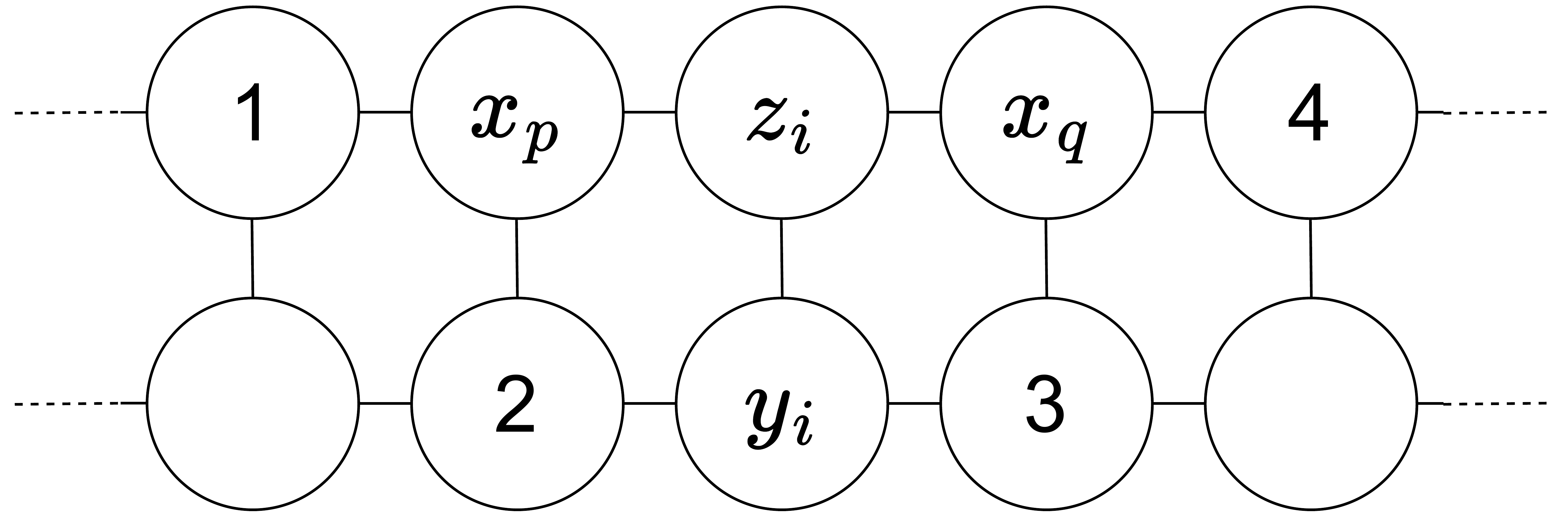}
\caption{One possible configuration of $y_j$, $x_p$, and $x_q$}\label{fig:7}
\end{center}
\end{figure}

Next, we show that $x_i (1 \leq i \leq n)$ can have at most one $z_p$ as a neighbor.
This can be shown in the same manner as for $z_i$ above, by assuming that $z_p$ and $z_q$ are neighbors of $x_i$ and by noting that
\begin{itemize}
\item $y_i$ must be a neighbor of $x_i$,

\item $b_i$ and $z_i$ must be neighbors of $y_i$,

\item $y_p$ and $c_p$ must be neighbors of $z_p$, and

\item $y_q$ and $c_q$ must be neighbors of $z_q$.

\end{itemize}
 
Now, from $\pi$, construct the directed graph $G'=(V', E')$ where $V'= \{ u_1, u_2, \ldots, u_n \}$ and there is an arc $(u_{i}, u_{j}) \in E'$ if and only if $z_i$ is placed next to $x_j$ in $\pi$. From the above observations, we know that each $u_{i} (1 \leq i \leq n-1)$ has exactly one outgoing edge and each $u_{i} (1 \leq i \leq n)$ has at most one incoming edge. These facts imply that $G'$ consists of at most one directed path and some number of directed cycles. In the following, we will show that there is no cycle. 
Assume on the contrary that there exists a cycle $u_{s_{1}}, u_{s_{2}}, \ldots, u_{s_{t}}, u_{s_{1}}$ of length $t$. 
Then, in the seat arrangement $\pi$, $z_{s_{i}}$ is placed next to $x_{s_{i+1}}$ for $1 \leq i \leq t-1$ and $z_{s_{\ell}}$ is placed next to $x_{s_{1}}$, so there exists a cycle $x_{s_{1}}, y_{s_{1}}, z_{s_{1}}, x_{s_{2}}, y_{s_{2}}, z_{s_{2}}, \dots, x_{s_{t}}, y_{s_{t}}, z_{s_{t}}, x_{s_{1}}$ in the seat graph.
Since the seat graph is a $2 \times m$ grid, there must be two consecutive bends as depicted in Cases (i), (ii) and (iii) of Figure \ref{fig:8}. 
First, consider Case (i).
Recall that $c_{s_k}$ (resp.~$c_{s_{k-1}}$) must be isolated or placed next to $z_{s_k}$ (resp.~$z_{s_{k-1}}$), but since there is no available vertex next to $z_{s_k}$ or $z_{s_{k-1}}$, both of $c_{s_k}$ and $c_{s_{k-1}}$ must be isolated.
However, this is a contradiction because at most one agent can be isolated as we have seen before.
Hence, Case (i) is impossible.
For the same reason, neither Case (ii) nor (iii) is possible. Hence we can exclude the existence of a cycle and conclude that $G'$ consists of one path $u_{s_{1}}, u_{s_{2}}, \ldots, u_{s_{n}}$.
By construction of $G'$, for each $i (1\leq i \leq n-1$), $z_{s_{i}}$ is placed next to $x_{s_{i+1}}$ in $\pi$. Thus, by the aforementioned property, there is an arc $(v_{s_{i}}, v_{s_{i+1}})$ in $G$ and hence there is a Hamiltonian path $v_{s_{1}}, v_{s_{2}}, \ldots, v_{s_{n}}$. 
This completes the proof.

\begin{figure}[htbp]
\begin{center}
\includegraphics[width=120mm]{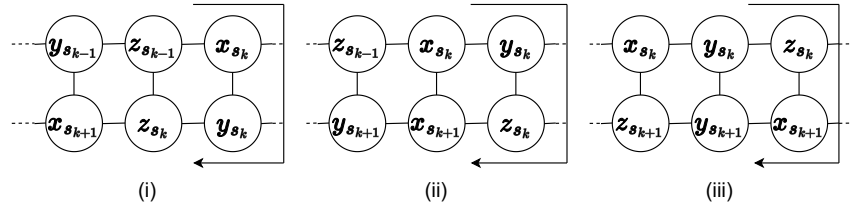}
\caption{Three possible configurations of the seat arrangement $\pi$ when there is a cycle in $G'$}\label{fig:8}
\end{center}
\end{figure}

\end{proof}

\begin{theorem}
Deciding whether an exchange-stable arrangement on a $3 \times m$ grid graph exists is NP-complete. 
\end{theorem}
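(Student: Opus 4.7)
The plan is to combine the ideas of Theorem~3 (the $\ell \geq 3$ envy-freeness proof, which uses nine gadget agents per vertex) with those of Theorem~4 (the $2 \times m$ exchange-stability proof, which uses the isolator agent $s$ together with $t_1,t_2,t_3$). Reducing from DHP*, for each $v_i \in V$ I would introduce the nine agents $x_i,y_i,z_i,a_i,b_i,c_i,d_i,e_i,f_i$, plus the quadruple $s,t_1,t_2,t_3$, plus a constant number of dummy agents with all-zero preferences to fill the $3 \times m$ board (where $m$ is chosen so that $3m$ equals the total agent count). The preference profile will copy Theorem~4's table, extended so that each of the pairs $\{a_i,d_i\}$, $\{b_i,e_i\}$, $\{c_i,f_i\}$ has favorite (utility~$3$) $x_i,y_i,z_i$ respectively; the entries $-10$ toward $s$, $-1$ toward generic others, and $0$ toward $\{t_1,t_2,t_3\}$ are inherited from Theorem~4, and $z_n$'s favorite is set to $c_n$.

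For the forward direction, given a Hamiltonian path $v_1,\ldots,v_n$ in $G$, I would arrange $x_i,y_i,z_i$ consecutively across the middle row of the $3\times m$ grid, with $a_i,d_i$ directly above and below $x_i$, $b_i,e_i$ above and below $y_i$, $c_i,f_i$ above and below $z_i$; then seat $s$ at a corner with $t_1,t_2$ as its only two neighbors, and place $t_3$ and any dummies in the remaining cells. Exchange-stability would follow by the three-group partition used in Theorem~4: Group~1 ($s,t_1,t_2,t_3$) has no blocking pair because no gadget agent would accept a seat adjacent to $t$'s; Groups~2 and~3 (the gadget agents) already touch a favorite, and a check of the few residual envies (the analogues of items (i)--(v) in Theorem~4) shows that none can be reciprocated.

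For the converse, I would first reproduce Theorem~4's argument that $s$ must be isolated (a non-$t$ neighbor of $s$ forms a blocking pair with a missing $t$), so at most one agent besides $s$ can also be isolated. Then $s$ serves as the lever: for any non-isolated gadget agent $q$, if $q$ is not next to its favorite, then $(q,s)$ is a blocking pair, because moving $q$ to $s$'s isolated corner raises $q$'s utility from negative to $0$ while moving $s$ to $q$'s seat raises $s$'s utility from $0$ to positive. Hence $a_i,d_i$ sit next to $x_i$, $b_i,e_i$ next to $y_i$, $c_i,f_i$ next to $z_i$, $x_i$ next to $y_i$, $y_i$ next to $z_i$, and each $z_i$ ($i\neq n$) next to some $x_p$ with $(v_i,v_p)\in E$. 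Because $z_i$ already has three forced non-$x$ neighbors on a graph of maximum degree $4$, it has at most one $x_p$ neighbor; by symmetry, $x_i$ has at most one $z_j$ neighbor. I then build the directed graph $G'$ with arc $(u_i,u_j)$ iff $z_i$ sits next to $x_j$, and note that $G'$ consists of at most one directed path plus some directed cycles.

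The final step is to rule out the cycles. The key observation is that in such a seat-graph cycle $x_{s_1},y_{s_1},z_{s_1},\ldots,z_{s_t},x_{s_1}$, each $x_{s_k}$ and $z_{s_k}$ uses all four of its grid-neighbours for its required agents, and so lies in the degree-$4$ interior of the middle row; then $y_{s_k}$, being a common neighbour of $x_{s_k}$ and $z_{s_k}$, also lies in the middle row, but the middle row induces a path, which contains no cycle. This argument is clean when no gadget agent is isolated; the main obstacle is that exchange-stability (unlike envy-freeness) allows exactly one isolated agent, so a single $a_i$, $b_i$, $c_i$, $d_i$, $e_i$, or $f_i$ could be missing from the bend, potentially freeing one slot. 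Handling this will require a case analysis on the bend geometry mirroring Cases~(i)--(iii) of Theorem~3, showing that even with one freed slot the remaining forced-neighbour agents cannot be packed into the vertices surrounding the bend. Once cycles are excluded, $G'$ is a single directed path, which translates directly into a Hamiltonian path of $G$.
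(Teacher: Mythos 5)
Your high-level architecture matches the paper's (nine gadget agents per vertex, an isolator $s$ with companion $t$-agents, dummies to fill the board, and the same path-versus-cycle endgame), but there is a concrete gap: you inherit the constants and the number of $t$-agents from the $2\times m$ construction, and these are calibrated for maximum degree $3$, whereas a $3\times m$ grid has maximum degree $4$. Two things break. First, with only three agents $t_1,t_2,t_3$, the agent $s$ can sit at an interior vertex of the middle row adjacent to all three $t$'s plus one gadget agent $p$; then there is no $t$ \emph{not} adjacent to $s$, so the blocking pair $(p,t)$ that your argument (and the paper's) uses to force $s$ to be isolated simply does not exist. Second, even when a free $t$ is available, the inequality fails numerically: a gadget agent $p$ next to $s$ can also be next to three agents it values at $3$, giving utility $-10+9=-1$, while moving to $t$'s seat only guarantees utility at least $-4$ (four neighbours at $-1$ each); so $p$ need not envy $t$. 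Since the entire converse direction rests on $s$ being isolated (that is what makes $s$ the universal lever), the reduction as specified does not go through. The paper repairs exactly this by introducing a fourth agent $t_4$ and rescaling the utilities to $-17$ for $s$ and $4$ for favourites, so that $-17+3\cdot 4=-5<-4$.

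On the cycle-exclusion step, your degree-counting observation (every $x_{s_k}$ and $z_{s_k}$ on a cycle needs four forced neighbours, hence sits at a degree-$4$ vertex in the middle row, hence the cycle would lie in an induced path) is an attractive alternative to the paper's argument, which instead classifies the leftmost vertices of the cycle into three configurations and shows each forces at least three isolated agents. But your version is only complete in the no-isolation case, and you explicitly defer the case where some forced neighbour is isolated to ``a case analysis on the bend geometry.'' That deferred case is where the real work lies: with the corrected construction (four $t$'s) the paper must allow up to \emph{two} isolated agents besides $s$, not one, and its case analysis is built around showing that a cycle would force at least three. As written, your proposal neither fixes the isolation bound nor carries out the case analysis, so the cycle-exclusion step is also incomplete.
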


\begin{proof}
Membership in NP is obvious.
We prove the hardness by a reduction from DHP*. Let $G = (V, E)$ be an instance of DHP* and let $n = |V|$. For each vertex $v_i \in V$, we introduce nine agents $x_i$, $y_i$, $z_i$, $a_i$, $b_i$, $c_i$, $d_i$, $e_i$, and $f_i$. Moreover, we add five agents $s$, $t_1$, $t_2$, $t_3$, $t_4$, and a {\it dummy} agent $D_1$. 
Therefore, there are $9n+6$ agents in total.
Their preference profile $P$ is given in Table \ref{table:4}.
Finally, we set $m = 3n+2$, i.e., our seat graph is a $3\times (3n+2)$ grid graph.

\begin{table}[t]
    
    \centering
    
    \begin{tabular}{c|c|c|c|c}\hline\hline
       & $-17$ & $-1$ & $0$ & $4$ \\ \hline
      $x_i(1 \leq i \leq n)$ & $s$ & other agents & $t_1$, $t_2$, $t_3$, $t_4$ & $y_i$ \\ \hline
      $y_i(1 \leq i \leq n)$ & $s$ & other agents & $t_1$, $t_2$, $t_3$, $t_4$ & $z_i$ \\ \hline
      $z_i(1 \leq i \leq n-1)$ & $s$ & other agents & $t_1$, $t_2$, $t_3$, $t_4$ & $x_p$ $((v_i, v_p)\in E)$ \\ \hline
      $z_n$ & $s$ & other agents & $t_1$, $t_2$, $t_3$, $t_4$ & $c_n$  \\ \hline
      
    \end{tabular}

  \vspace{3mm}

    \centering
    \begin{tabular}{c|c|c|c|c}\hline\hline
       & $-17$ & $-1$ & $0$ & $4$ \\ \hline
      $a_i(1 \leq i \leq n)$ & $s$ & other agents & $t_1$, $t_2$, $t_3$, $t_4$ & $x_i$ \\ \hline
      $b_i(1 \leq i \leq n)$ & $s$ & other agents & $t_1$, $t_2$, $t_3$, $t_4$ & $y_i$ \\ \hline
      $c_i(1 \leq i \leq n)$ & $s$ & other agents & $t_1$, $t_2$, $t_3$, $t_4$ & $z_i$ \\ \hline
      $d_i(1 \leq i \leq n)$ & $s$ & other agents & $t_1$, $t_2$, $t_3$, $t_4$ & $x_i$ \\ \hline
      $e_i(1 \leq i \leq n)$ & $s$ & other agents & $t_1$, $t_2$, $t_3$, $t_4$ & $y_i$ \\ \hline
      $f_i(1 \leq i \leq n)$ & $s$ & other agents & $t_1$, $t_2$, $t_3$, $t_4$ & $z_i$ \\ \hline
    \end{tabular}

\vspace{3mm}

    \centering
    \begin{tabular}{c|c|c}\hline\hline
       & $0$ & $1$  \\ \hline
      $s$ & $t_1$, $t_2$, $t_3$, $t_4$ & other agents   \\ \hline
      $t_i(1 \leq i \leq 4)$ & other agents & $s$ \\ \hline
    \end{tabular}

  \vspace{3mm}

    \centering
    \begin{tabular}{c|c|c}\hline\hline
       & $-17$ & $0$  \\ \hline
      $D_1$ & $s$ & other agents\\ \hline
    \end{tabular}
    \caption{Preference profile $P$}\label{table:4}

\end{table}

We first argue that if there is a Hamiltonian path in $G$, $P$ has an exchange-stable arrangement on a $3\times m$ grid graph. 
Suppose that $G$ has a Hamiltonian path $H$. Recall that $v_n$ has no outgoing edge, so $v_n$ must be the last vertex of $H$. By renaming other vertices, we assume without loss of generality that $H$ is ordered according to the indices of vertices, i.e., $H:= v_1, v_2, \ldots, v_n$.
Corresponding to $H$, define the seat arrangement $\pi$ as shown in Figure \ref{fig:9}, where each agent's utility in $\pi$ is given near the vertex he is seated.

\begin{figure}[htbp]
\begin{center}
\includegraphics[width=150mm]{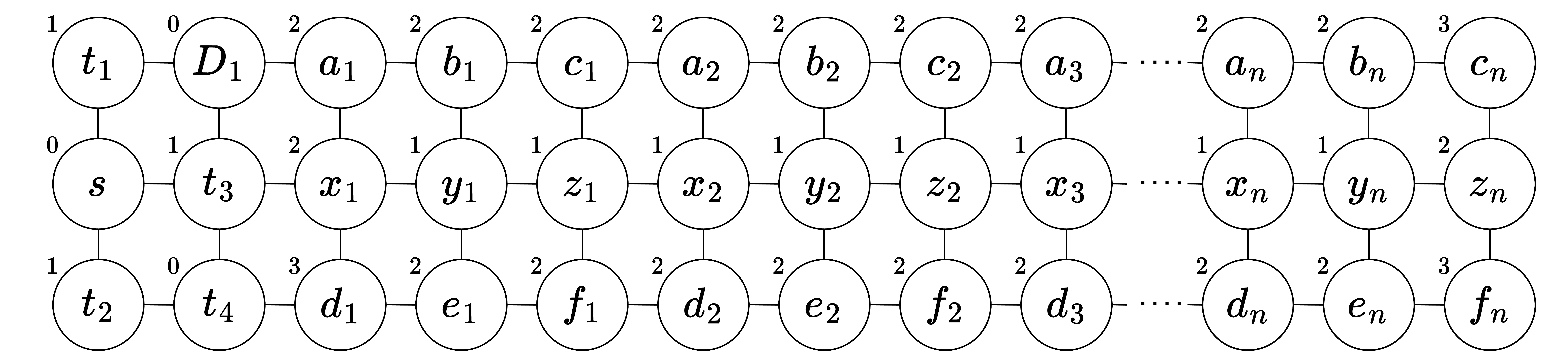}
\caption{The seat arrangement $\pi$}\label{fig:9}
\end{center}
\end{figure}

We will show that $\pi$ is an exchange-stable arrangement. To facilitate the proof, we partition the agents into four groups, and for each group we show that no one is a blocking agent.
For an agent $q \in \{ x_i, y_i, z_i, a_i, b_i, c_i, d_i, e_i, f_i \mid 1 \leq i \leq n \}$, $q$'s {\em favorite} is an agent to whom $q$'s utility is 4. 
For example, $a_3$'s favorite is $x_3$ and $y_5$'s favorite is $z_5$. Note that $z_i (1 \leq i \leq n-1)$ may have more than one favorite. 

{\bf\boldmath Group 1. $s$, $t_1$, $t_2$, $t_3$, $t_4$, and $D_1$.}
First, $D_1$ has no envy because he already has maximum possible utility in $\pi$.
Therefore, he cannot be a blocking agent.
Similarly, $t_1$, $t_2$, and $t_3$ also already have maximum possible utility in $\pi$. 
Hence they have no envy in $\pi$ and cannot be blocking agents.
Agent $t_4$ has utility 0 now, and he can get positive utility only by sitting next to $s$.
Hence, he envies only $t_1$, $t_2$, and $t_3$.
However, none of $t_1$, $t_2$, and $t_3$ has envy in $\pi$ as mentioned above, so $t_4$ cannot be a blocking agent.
Agent $s$'s neighbors are $t_1$, $t_2$, and $t_3$, but utilities of agents $x_i$, $y_i$, $z_i$, $a_i$, $b_i$, $c_i$, $d_i$, $e_i$, and $f_i$ $(1 \leq i \leq n)$ for $t_1$, $t_2$, and $t_3$ are 0, so their utilities decrease to 0 by moving to $\pi (s)$, that is, they do not envy $s$.
Hence, $s$ is not a blocking agent.
Thus no one in Group 1 is a blocking agent.

{\bf\boldmath Group 2. $a_i$, $b_i$, $c_i$, $d_i$, $e_i$, and $f_i$ $(1 \leq i \leq n)$.}
Note that these agents have only one favorite and are now sitting next to the favorite.
Therefore, to increase the utility, an agent still has to sit next to his favorite after the swap.
For instance, $b_2$'s favorite is $y_2$, so $b_2$ may envy only $x_2$, $y_2$, $z_2$, and $e_2$. 
However, in this case, $b_2$ actually has no envy because his utility remains 2 or decreases to 1 by swapping a seat with any one of these four agents.
Checking in this way, we can see that Group 2 agents having 
indices 2 through $n-1$ have no envy in $\pi$, and envy emanating from Group 2 agents is only from $a_1$ to $d_1$.
However, it is easy to see that $d_1$ has no envy to $a_1$.
Therefore, we can conclude that no one in Group 2 can be a blocking agent.

{\bf\boldmath Group 3. $x_i$ and $y_i$ $(1 \leq i \leq n)$.}
Note that these agents have only one favorite and are now sitting next to the favorite.
Therefore, to increase the utility, an agent still has to sit next to his favorite after the swap.
For instance, $y_2$'s favorite is $z_2$, so $y_2$ may envy only $c_2$, $f_2$, $z_2$, and $x_3$. 
However, if $y_2$ swaps a seat with $z_2$ or $x_3$, $y_2$'s utility remains unchanged.
Also, from the above observations for Group 2 agents, neither $c_2$ nor $f_2$ is a blocking agent.
Hence, $y_2$ cannot be a blocking agent.
Checking in this way, we can see that Group 3 agents having 
indices 1 through $n-1$ are not blocking agents.
It remains to consider $x_n$ and $y_n$.
First, $x_n$ envies $b_n$, $e_n$, and $z_n$.
However, $b_n$ and $e_n$ are Group 2 agents, so we have seen that they are not blocking agents.
It is easy to see that $z_n$ does not envy $x_n$.
Hence, $x_n$ is not a blocking agent.
Next, $y_n$ envies $c_n$, $f_n$, and $z_n$.
However, $c_n$ and $f_n$ are Group 2 agents, so they are not blocking agents, and it is easy to see that $z_n$ does not envy $y_n$.
Hence, $y_n$ is not a blocking agent and we can conclude that no one in Group 3 can be a blocking agent.

{\bf\boldmath Group 4. $z_i$ $(1 \leq i \leq n)$.}
We now have only to care about a blocking pair within Group 4 agents.
Observe that for any $z_i$ and $z_j$, $z_i$ does not envy $z_j$ because $z_i$ cannot increase his utility by moving to $\pi(z_j)$.
Thus no one in Group 4 is a blocking agent.

From the above discussion, there is no blocking pair in $\pi$, hence $\pi$ is exchange-stable.

For the opposite direction, assume that there is an exchange-stable arrangement $\pi$ of $P$ on a $3 \times m$ grid graph.
We will show that $G$ has a Hamiltonian path.
We say that an agent $p$ is {\it isolated} if $p$'s neighbors in $\pi$ are included in $\{t_1, t_2, t_3, t_4\}$.
We first show that $s$ is isolated.
Suppose not, and let $p \not \in \{t_1, t_2, t_3, t_4\}$ be an agent who is seated next to $s$.
Since the maximum degree of the seat graph is four, there must be an agent $t \in \{t_1, t_2, t_3, t_4\}$ who is {\it not} seated next to $s$.
Note that $p$'s utility is now at most $-17+4\times 3=-5$, but if he moves to $\pi(t)$, the utility becomes at least $-4$.
Similarly, $t$'s utility is 0 now, but if he moves to $\pi(p)$, it increase to 1.
These mean that $(p, t)$ is a blocking pair, contradicting the stability of $\pi$.
Therefore, $s$ is isolated.
Note that there can be at most two isolated agents in $A \setminus \{ s \}$, as shown in Figure {\ref{fig:10}}, where in (i) two agents $b_3$ and $c_4$ are isolated, while in (ii) one agent $a_6$ is isolated.

\begin{figure}[htbp]
\begin{center}
\includegraphics[width=110mm]{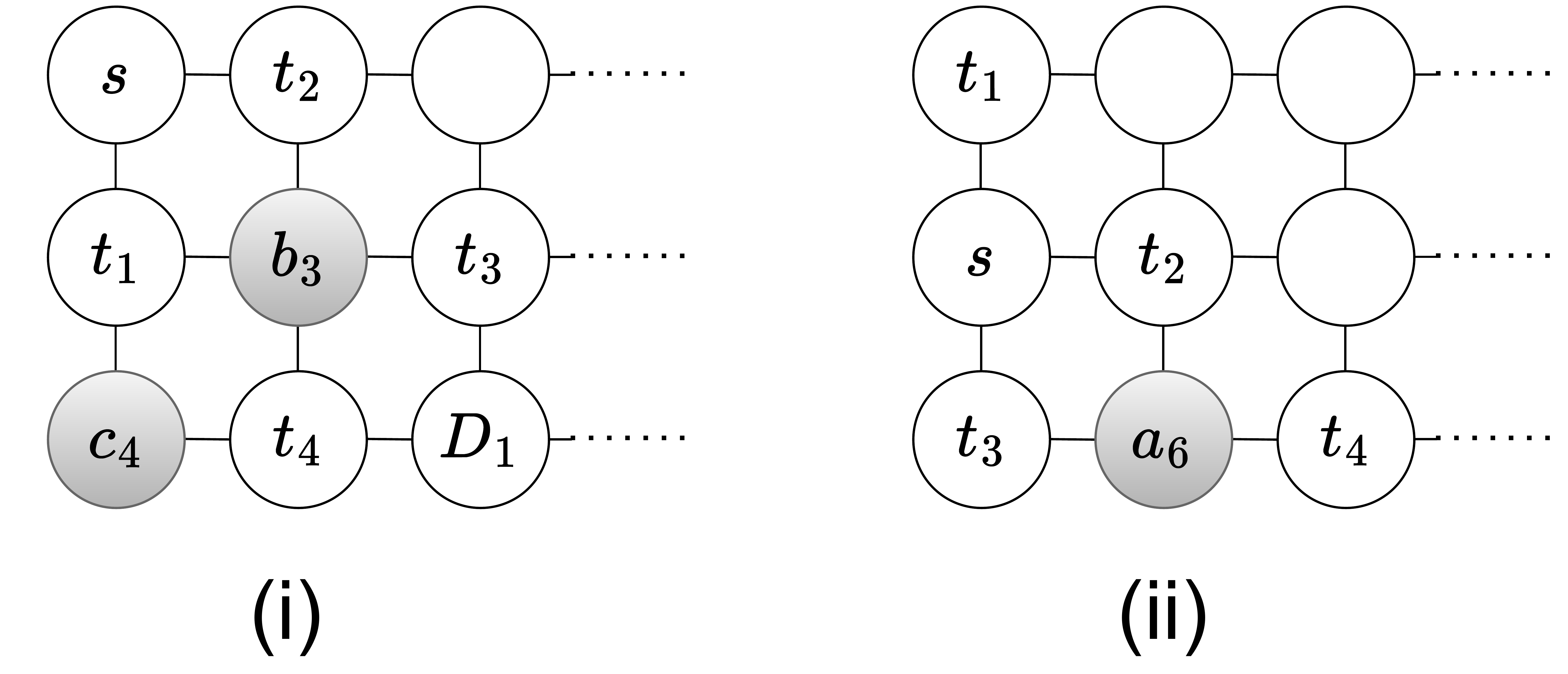}
\caption{Examples of $\pi$ including one or two isolated agents in $A \setminus \{ s \}$}\label{fig:10}
\end{center}
\end{figure}

Next, we claim that $x_i (1 \leq i \leq n)$ is not isolated for the following reason:
If $x_i$ is isolated, then neither $a_i$ nor $d_i$ is a neighbor of $x_i$.
Since there can be at most two isolated agents other than $s$, $a_i$ or $d_i$ is not isolated.
Suppose that $a_i$ is not isolated (the same argument below holds in case $d_i$ is not isolated). 
Then $a_i$ has at least one neighbor to whom his utility is $-1$, so his utility is negative, but he can increase the utility to 0 by moving to $\pi(s)$.
On the other hand, $s$'s utility is 0, but he can have a positive utility by moving to $\pi(a_i)$ because $a_i$ is not isolated.
Therefore $(a_i, s)$ is a blocking pair, contradicting the stability of $\pi$.
Hence, $x_i$ is not isolated.
For the same reason, we can see that none of $y_i (1 \leq i \leq n)$, $z_i (1 \leq i \leq n)$ are isolated.

Moreover, we claim that if $a_i$ (resp.~$d_i$) $(1 \leq i \leq n)$ is not isolated, $a_i$ (resp.~$d_i$) is placed next to $x_i$.  For, if not, by swapping $a_i$ (resp.~$d_i$) and $s$, $a_i$ (resp.~$d_i$) can increase the utility from at most $-1$ to $0$ and $s$ can increase the utility from $0$ to at least $1$. Therefore, $(a_i, s)$ (resp.~$(d_i, s)$) is a blocking pair, a contradiction. 
For the same reason, for $1 \leq i \leq n$, $b_i$ and $e_i$ must be placed next to $y_i$ if not isolated, and $c_i$ and $f_i$ must be placed next to $z_i$ if not isolated. 
Also, for the same reason, we can see that $x_i$ is placed next to $y_i$ and $y_i$ is placed next to $z_i$ for $1 \leq i \leq n$.

Then, let us consider $z_i (1 \leq i \leq n-1)$. 
Recall that the vertex $v_i$ has at least one outgoing edge in $G$.
We will show that $z_i$ is placed next to some $x_p$ such that $(v_i, v_p) \in E$.
Suppose not. Then $z_i$'s utility is at most $-1$ because $z_i$ is not isolated as proved above. 
Then, by swapping $z_i$ and $s$, $z_i$ can increase the utility from at most $-1$ to $0$ and $s$ can increase the utility from $0$ to at least $1$. Therefore, $(z_i, s)$ is a blocking pair, a contradiction. 
We also show that $z_{i}$ can never have more than one $x_{p}$s as his neighbor.
Assume on the contrary that $x_{p}$ and $x_q$ are $z_i$'s neighbors. 
Recall that $y_i$ must also be a neighbor of $z_i$, so $z_i$ is placed at a vertex of degree 3 or 4. 
First, suppose that the degree of $\pi(z_i)$ is 4 (Figure \ref{fig:11}(i)).
Recall that $x_i$ must be a neighbor of $y_i$, and $b_i$ and $e_i$ must be neighbors of $y_i$ if they are not isolated.
Similarly, $y_p$ (resp.~$y_q$) is a neighbor of $x_p$ (resp.~$x_q$) and $a_p$ and $d_p$ (resp.~$a_q$ and $d_q$), if not isolated, must be neighbors of $x_p$ (resp.~$x_q$).
To sum up, each of these nine agents must either be placed at one of six vertices 1 through 6 in Figure \ref{fig:11}(i) or be isolated.
However, as we have seen above, at most two agents can be isolated, so this is impossible.
There are several more cases according whether how $y_i$, $x_p$, and $x_q$ are placed around $z_i$, but it is easy to see that any one of them is impossible by the same reasoning.
The case when the degree of $\pi(z_i)$ is 3 (Figure \ref{fig:11}(ii)) can be argued similarly.
In this case, each of these nine agents must be placed at one of five vertices 1 through 5 or must be isolated, which is impossible.

Next, we show that $x_i (1 \leq i \leq n)$ can have at most one $z_p$ as a neighbor.
This can be shown in the same manner as for $z_i$ above, by assuming that $z_p$ and $z_q$ are neighbors of $x_i$ and by noting that
\begin{itemize}
\item $y_i$ must be a neighbor of $x_i$,

\item $b_i$, $e_i$ and $z_i$ must be neighbors of $y_i$,

\item $y_p$, $c_p$ and $f_p$ must be neighbors of $z_p$, and

\item $y_q$, $c_q$ and $f_q$ must be neighbors of $z_q$.

\end{itemize}

\begin{figure}[htbp]
\begin{center}
\includegraphics[width=140mm]{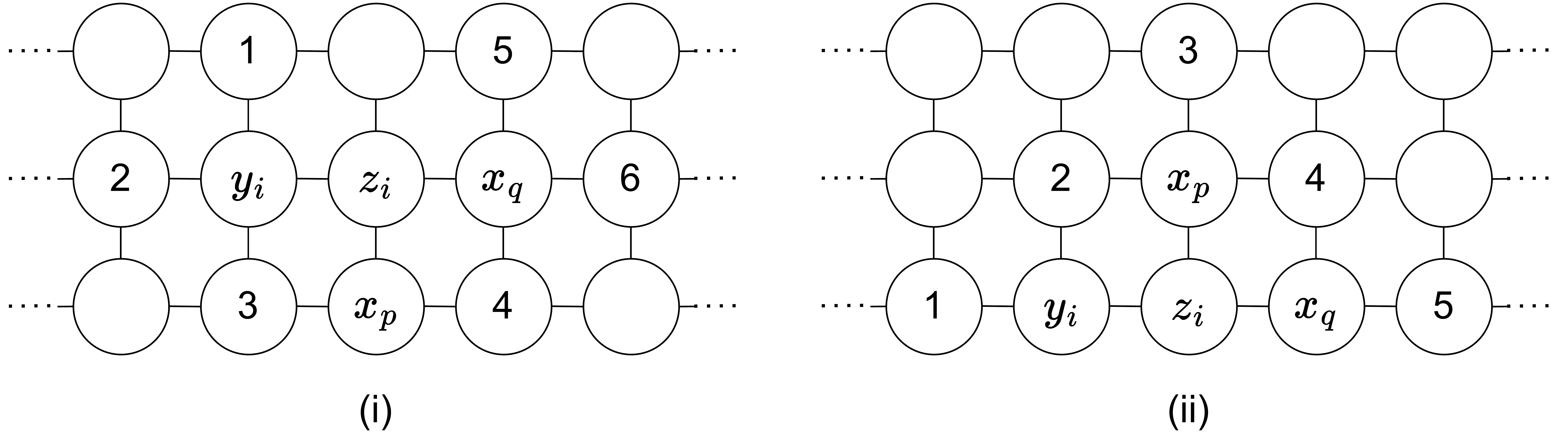}
\caption{Possible configurations of $y_j$, $x_p$, and $x_q$}\label{fig:11}
\end{center}
\end{figure}

Now, from $\pi$, construct the directed graph $G'=(V', E')$ where $V'= \{ u_1, u_2, \ldots, u_n \}$ and there is an arc $(u_{i}, u_{j}) \in E'$ if and only if $z_i$ is placed next to $x_j$ in $\pi$. From the above observations, we know that each $u_{i} (1 \leq i \leq n-1)$ has exactly one outgoing edge and each $u_{i} (1 \leq i \leq n)$ has at most one incoming edge. These facts imply that $G'$ consists of at most one directed path and some number of directed cycles. In the following, we will show that there is no cycle. 
Assume on the contrary that there exists a cycle $u_{s_{1}}, u_{s_{2}}, \ldots, u_{s_{t}}, u_{s_{1}}$ of length $t$.
Then, in the seat arrangement $\pi$, $z_{s_{i}}$ is placed next to $x_{s_{i+1}}$ for $1 \leq i \leq t-1$ and $z_{s_{t}}$ is placed next to $x_{s_{1}}$, so there exists a cycle $x_{s_{1}}, y_{s_{1}}, z_{s_{1}}, x_{s_{2}}, y_{s_{2}}, z_{s_{2}}, \dots, x_{s_{t}}, y_{s_{t}}, z_{s_{t}}, x_{s_{1}}$ in the seat graph.
Now, define the {\em leftmost vertices} as the vertices on this cycle lying on the leftmost column of the seat graph.
Since the graph has three rows, there are two or three leftmost vertices, as depicted in (i), (ii), and (iii) of Figure {\ref{fig:12}}.
Consider Figure {\ref{fig:12}} (i), where the leftmost vertices are $\pi(y_{s_i})$ and $\pi(z_{s_i})$ (here, we fixed agents to $y_{s_i}$ and $z_{s_{i}}$, but it is easy to see that the following argument holds for any choice).
Since this is a part of a cycle, $y_{s_i}$ and $z_{s_{i}}$'s right-hand neighbors can be determined to $x_{s_i}$ and $x_{s_{i+1}}$, respectively.
Furthermore, we can see that $x_{s_i}$'s right-hand neighbors is $z_{s_{i-1}}$.
Recall that $b_{s_i}$ and $e_{s_i}$ must be isolated or placed next to $y_{s_i}$, but since there is only one available vertex next to $y_{s_i}$, at least one of them must be isolated.
Similarly, $a_{s_i}$ and $d_{s_i}$ must be isolated or placed next to $x_{s_i}$, but there is no available vertex next to $x_{s_i}$, so both of them must be isolated.
However, as we have seen before, at most two agents can be isolated, a contradiction.
By symmetry, the case of Figure {\ref{fig:12}} (ii) can be handled similarly.
Finally, consider the case of Figure {\ref{fig:12}} (iii). 
Similarly as above, $x_{s_i}$ and $z_{s_i}$'s right-hand neighbors can be determined to $z_{s_{i-1}}$ and $x_{s_{i+1}}$, respectively.
Recall that $a_{s_i}$ and $d_{s_i}$ must be isolated or placed next to $x_{s_i}$, but since there is only one available vertex next to $x_{s_i}$, at least one of them must be isolated.
Similarly, $c_{s_i}$ and $f_{s_i}$ must be isolated or placed next to $z_{s_i}$, but since there is only one available vertex next to $z_{s_i}$, at least one of them must be isolated.
For the same reason, $c_{s_{i-1}}$ and $f_{s_{i-1}}$ must be isolated or placed at vertices 1 and 2, and $a_{s_{i+1}}$ and $d_{s_{i+1}}$ must be isolated or placed at vertices 2 and 3.
Hence at least one of these four agents must be isolated.
From the above arguments, at least three agents must be isolated, but this is impossible.
Thus we have excluded the existence of a cycle and can conclude that $G'$ consists of one path $u_{s_{1}}, u_{s_{2}}, \ldots, u_{s_{n}}$.
By construction of $G'$, for each $i (1\leq i \leq n-1$), $z_{s_{i}}$ is placed next to $x_{s_{i+1}}$ in $\pi$. Thus, by the aforementioned property, there is an arc $(v_{s_{i}}, v_{s_{i+1}})$ in $G$ and hence there is a Hamiltonian path $v_{s_{1}}, v_{s_{2}}, \ldots, v_{s_{n}}$. 
This completes the proof.

\begin{figure}[htbp]
\begin{center}
\includegraphics[width=140mm]{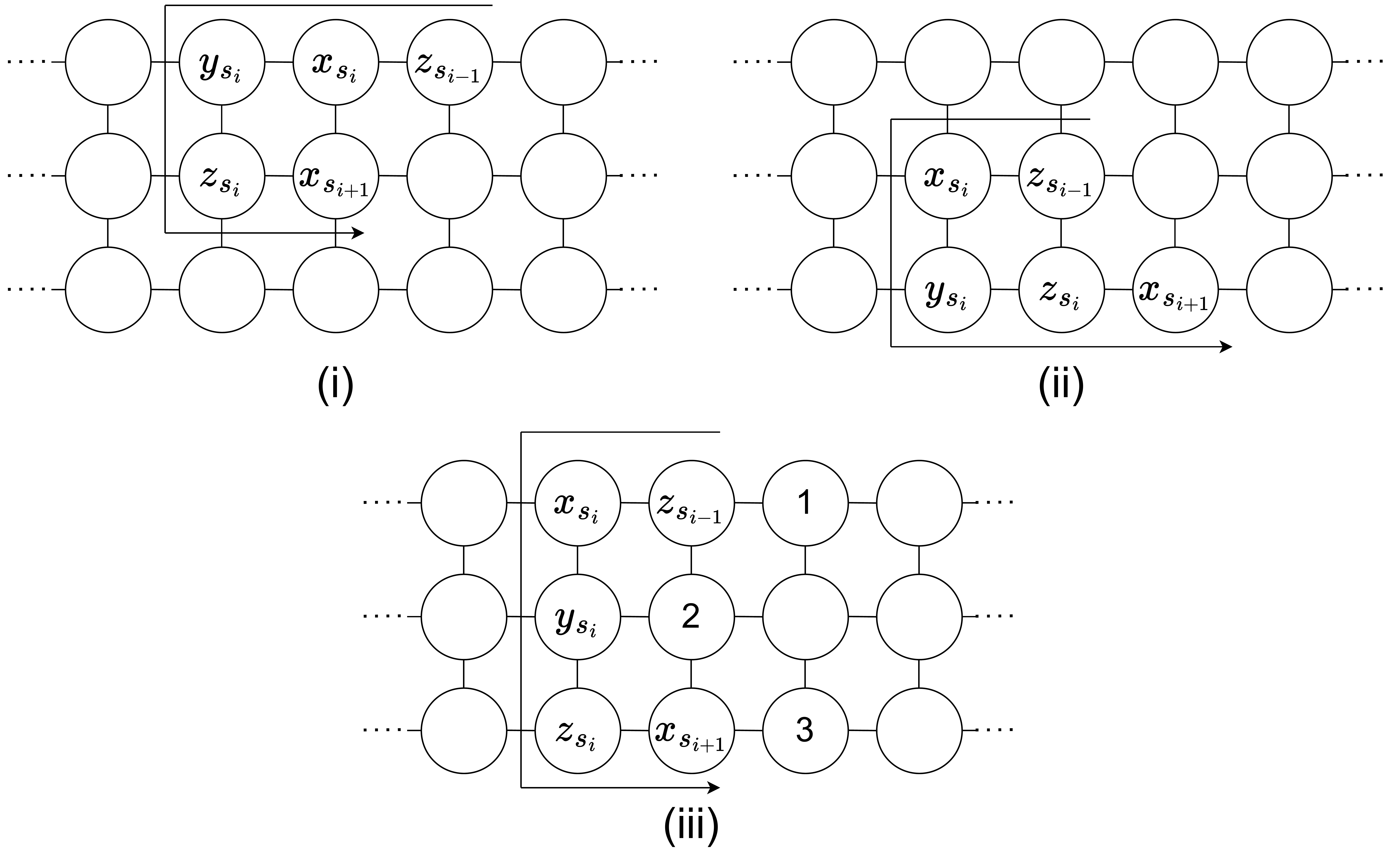}
\caption{Three possible configurations of the seat arrangement $\pi$ when there is a cycle in $G'$}\label{fig:12}
\end{center}
\end{figure}

\end{proof}

\begin{theorem}
Deciding whether an exchange-stable arrangement on an $\ell \times m$ grid graph exists is NP-complete for $\ell \geq 4$.
\end{theorem}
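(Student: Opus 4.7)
The plan is to extend the reduction of Theorem 7 to $\ell \geq 4$. I would reduce from DHP* using the same nine agents $x_i, y_i, z_i, a_i, b_i, c_i, d_i, e_i, f_i$ per vertex $v_i$ as in Theorem 7, carrying over their preference profile verbatim. I would keep the helpers $s$ and $t_1, \ldots, t_k$ (for a small $k$ chosen large enough that $s$ can be surrounded entirely by $t$'s in the eventual $\ell \times (3n+2)$ grid with at least one extra $t$ left over), and introduce dummy agents $D_j$ with utility $-17$ for $s$ and $0$ for every other agent. The number of dummies is $\ell(3n+2) - 9n - k - 1$ so that the agent count matches the grid size.

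For the forward direction, given a Hamiltonian path $v_1, \ldots, v_n$ in $G$, I would place the main agents in three specific rows of the grid according to the pattern of Figure \ref{fig:9}, put $s$ and the $t_i$'s at the leftmost column so that $s$ is isolated, and fill the remaining cells with dummies. The arrangement is exchange-stable by the same group-based case analysis as in Theorem 7 (Groups 1--4), with the additional straightforward verification that no dummy can be a blocking agent: each dummy already holds its maximum attainable utility of $0$ and therefore cannot envy anyone.

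For the backward direction, I would reprove the chain of conclusions of Theorem 7 with appropriate adjustments. First, $s$ must be isolated in any exchange-stable $\pi$, by the same blocking-pair argument using any non-$t$ neighbor of $s$. Next, I would bound the number of non-$s$ agents that can be isolated; although the grid has more rows, the number of $t_i$'s is still fixed, so only a bounded number of vertex positions can have all their neighbors lying in $\{t_1, \ldots, t_k\}$. Using this bound I would recover the structural conclusions of Theorem 7: each $x_i, y_i, z_i$ is non-isolated and adjacent to its mandatory partners, each $a_i, b_i, \ldots, f_i$ is adjacent to its favorite or is isolated, $z_i$ ($1 \leq i \leq n-1$) is adjacent to some $x_p$ with $(v_i, v_p) \in E$, and both $z_i$ and $x_i$ have at most one such cross-edge. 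Finally, defining the directed graph $G'$ as before, I would rule out cycles by examining each potential bend configuration of a cycle in the seat graph and showing that it forces more non-$s$ agents to be isolated than the bound allows.

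The main obstacle is the cycle-elimination step for general $\ell$. Unlike Theorems 6 and 7, where the bounded number of rows restricts cycles to a handful of canonical bend configurations (Figures \ref{fig:8} and \ref{fig:12}), an $\ell \times m$ grid with $\ell \geq 4$ permits cycles of more varied geometric shape, and the bend structures do not immediately reduce to the previous cases. I expect the key observation to be that any such cycle contains at least one \emph{extreme} bend (for example, at its topmost or bottommost row along the cycle) which falls into one of the Theorem 7 configurations up to rotation, and therefore forces exactly the same over-crowding of mandatory-neighbor agents. By choosing $k$ and the dummy preferences so that the isolation budget admits only a fixed small number of isolated non-$s$ agents independent of $\ell$, this over-crowding yields a contradiction, ruling out cycles regardless of the value of $\ell$ and completing the reduction.
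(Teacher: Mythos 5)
Your construction, forward direction, and the whole structural chain of the backward direction match the paper's proof essentially verbatim (the paper uses exactly four agents $t_1,\dots,t_4$ and $3n\ell+2\ell-9n-5$ dummies with the same preferences you describe, and establishes the same sequence of claims: $s$ isolated, at most one isolated agent in $A\setminus\{s\}$, mandatory adjacencies, at most one cross-edge at each $z_i$ and $x_i$). The one place where you diverge is exactly the step you flag as the main obstacle, and there your sketch has a genuine gap. For $\ell\geq 4$ the isolation budget is: at most \emph{one} agent of $A\setminus\{s\}$ can be isolated. A single bend of the cycle --- three consecutive cycle vertices $z_{s_{k-1}},x_{s_k},y_{s_k}$ at which the cycle turns --- forces only that at least one of $c_{s_{k-1}}$, $f_{s_{k-1}}$, $b_{s_k}$, $e_{s_k}$ is isolated, because the six mandatory neighbors of $z_{s_{k-1}}$ and $y_{s_k}$ would otherwise have to occupy five cells. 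One forced isolated agent does not exceed a budget of one, so ``at least one extreme bend'' cannot by itself produce a contradiction. Nor can you import the $3\times m$ configurations ``up to rotation'': there, the three-row constraint pins down the right-hand neighbors of the cycle vertices in the leftmost column, which is what lets a single configuration force three isolated agents; that leverage disappears for general $\ell$.

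The paper closes the gap quantitatively rather than by strengthening the single-bend analysis. It defines a bend as the triple $z_{s_{k-1}},x_{s_k},y_{s_k}$ with center $x_{s_k}$, takes the \emph{bottom-left bend} (center in the lowest row among centers lying on the leftmost column of the cycle) and the \emph{top-right bend} (defined symmetrically), checks that these two bends are vertex-disjoint, and concludes that each independently forces at least one isolated agent --- hence at least two isolated agents in $A\setminus\{s\}$, exceeding the budget of one. If you add this two-disjoint-bends argument, and separately verify the ``at most one isolated non-$s$ agent'' bound (which is where the hypothesis $\ell\geq 4$ is actually used, since for $\ell=3$ two non-$s$ agents can be isolated and the paper needs a different, finer case analysis there), your outline becomes the paper's proof.
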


\begin{proof}
Membership in NP is obvious.
We prove the hardness by a reduction from DHP*. Let $G = (V, E)$ be an instance of DHP* and let $n = |V|$. For each vertex $v_i \in V$, we introduce nine agents $x_i$, $y_i$, $z_i$, $a_i$, $b_i$, $c_i$, $d_i$, $e_i$, and $f_i$. Moreover, we add five agents $s$, $t_1$, $t_2$, $t_3$, $t_4$, and {\it dummy} agents $D_i (1\leq i \leq 3n\ell+2\ell-9n-5)$.
Therefore, there are $3n\ell+2\ell$ agents in total.
Their preference profile $P$ is given in Table {\ref{table:5}}.
Finally, we set $m=3n+2$, i.e., our seat graph is an $\ell \times (3n+2)$ grid graph.

\begin{table}[t]
    \centering
     
    \begin{tabular}{c|c|c|c|c}\hline\hline
       & $-17$ & $-1$ & $0$ & $4$ \\ \hline
      $x_i(1 \leq i \leq n)$ & $s$ & other agents & $t_1$, $t_2$, $t_3$, $t_4$ & $y_i$ \\ \hline
      $y_i(1 \leq i \leq n)$ & $s$ & other agents & $t_1$, $t_2$, $t_3$, $t_4$ & $z_i$ \\ \hline
      $z_i(1 \leq i \leq n-1)$ & $s$ & other agents & $t_1$, $t_2$, $t_3$, $t_4$ & $x_p$ $((v_i, v_p)\in E)$ \\ \hline
      $z_n$ & $s$ & other agents & $t_1$, $t_2$, $t_3$, $t_4$ & $c_n$  \\ \hline
      
    \end{tabular}

  \vspace{3mm}

    \centering
    \begin{tabular}{c|c|c|c|c}\hline\hline
       & $-17$ & $-1$ & $0$ & $4$ \\ \hline
      $a_i(1 \leq i \leq n)$ & $s$ & other agents & $t_1$, $t_2$, $t_3$, $t_4$ & $x_i$ \\ \hline
      $b_i(1 \leq i \leq n)$ & $s$ & other agents & $t_1$, $t_2$, $t_3$, $t_4$ & $y_i$ \\ \hline
      $c_i(1 \leq i \leq n)$ & $s$ & other agents & $t_1$, $t_2$, $t_3$, $t_4$ & $z_i$ \\ \hline
      $d_i(1 \leq i \leq n)$ & $s$ & other agents & $t_1$, $t_2$, $t_3$, $t_4$ & $x_i$ \\ \hline
      $e_i(1 \leq i \leq n)$ & $s$ & other agents & $t_1$, $t_2$, $t_3$, $t_4$ & $y_i$ \\ \hline
      $f_i(1 \leq i \leq n)$ & $s$ & other agents & $t_1$, $t_2$, $t_3$, $t_4$ & $z_i$ \\ \hline
    \end{tabular}

   \vspace{3mm}

    \centering
    \begin{tabular}{c|c|c}\hline\hline
       & $0$ & $1$  \\ \hline
      $s$ & $t_1$, $t_2$, $t_3$, $t_4$ & other agents   \\ \hline
      $t_i(1 \leq i \leq 4)$ & other agents & $s$ \\ \hline
      
    \end{tabular}

  \vspace{3mm}

    \centering
    \begin{tabular}{c|c|c}\hline\hline
       & $-17$ & $0$  \\ \hline
      $D_i(1 \leq i \leq 3n\ell+2\ell-9n-5)$ & $s$ & other agents   \\ \hline

    \end{tabular}
    \caption{Preference profile $P$}\label{table:5}
 
\end{table}

We first argue that if there is a Hamiltonian path in $G$, $P$ has an exchange-stable arrangement on an $\ell\times m$ grid graph. 
Suppose that $G$ has a Hamiltonian path $H$.
Recall that $v_n$ has no outgoing edge, so $v_n$ must be the last vertex of $H$.
By renaming other vertices, we assume without loss of generality that $H$ is ordered according to the indices of vertices, i.e., $H:= v_1, v_2, \ldots, v_n$.
Corresponding to $H$, define the seat arrangement $\pi$ as shown in Figure \ref{fig:13}, where each agent's utility in $\pi$ is given near the vertex he is seated.

\begin{figure}[htbp]
\begin{center}
\includegraphics[width=150mm]{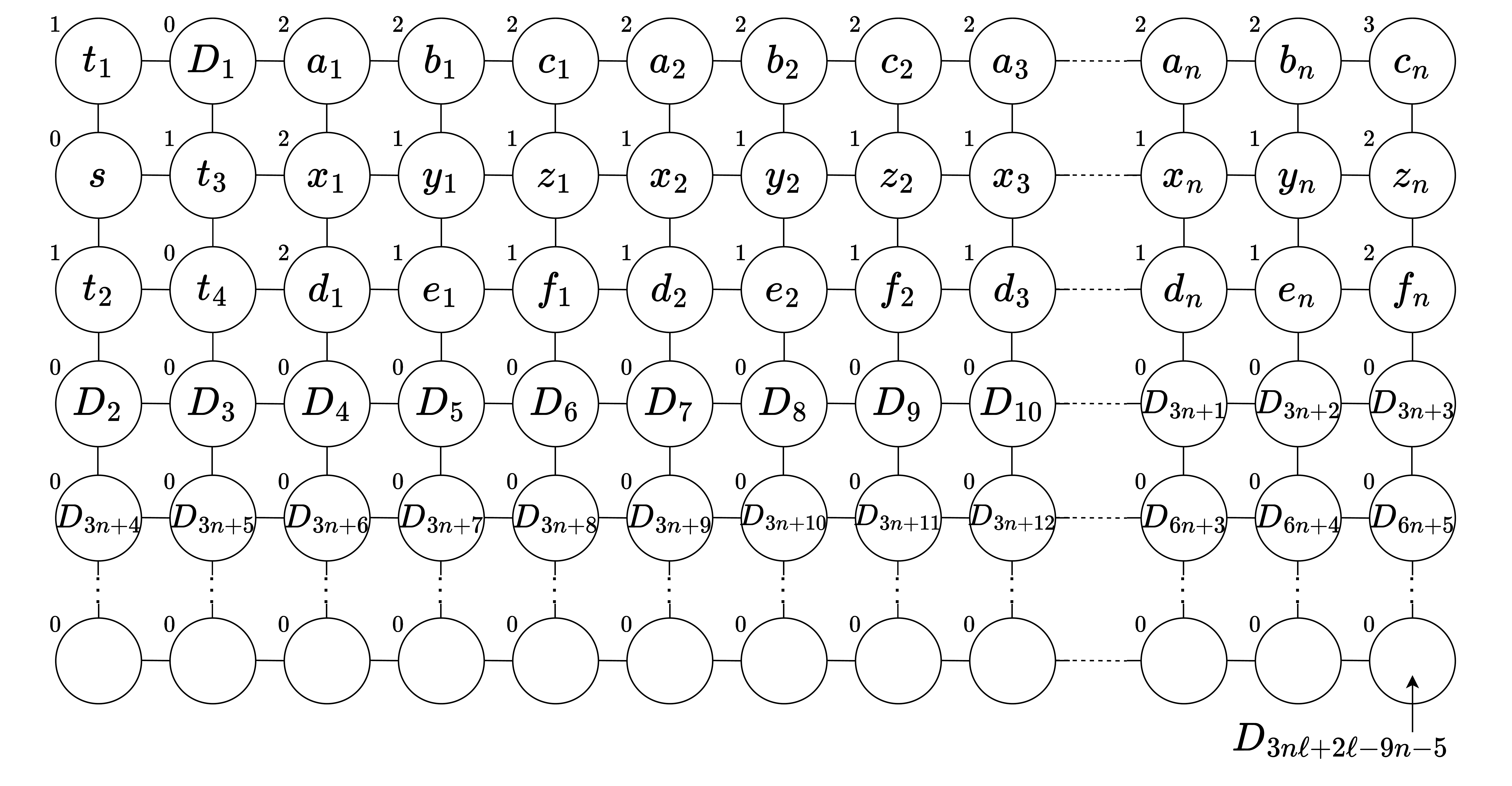}
\caption{The seat arrangement $\pi$}\label{fig:13}
\end{center}
\end{figure}

We will show that $\pi$ is an exchange-stable arrangement. To facilitate the proof, we partition the agents into four groups, and for each group we show that no one is a blocking agent.
For an agent $q \in \{ x_i, y_i, z_i, a_i, b_i, c_i, d_i, e_i, f_i \mid 1 \leq i \leq n \}$, $q$'s {\em favorite} is an agent to whom $q$'s utility is 4. 
For example, $b_3$'s favorite is $y_3$ and $y_2$'s favorite is $z_2$. Note that $z_i (1 \leq i \leq n-1)$ may have more than one favorite. 

{\bf\boldmath Group 1. $s$, $t_1$, $t_2$, $t_3$, $t_4$, and $D_i (1\leq i \leq 3n\ell+2\ell-9n-5)$.}
First, for any $i$, $D_i$ has no envy because he already has maximum possible utility in $\pi$.
Therefore, $D_i$ cannot be a blocking agent. 
Similarly, $t_1$, $t_2$, and $t_3$ also already have maximum possible utility in $\pi$. 
Hence they have no envy in $\pi$ and so cannot be blocking agents.
Agent $t_4$ has utility 0 now, and he can get positive utility only by sitting next to $s$.
Hence, he envies only $t_1$, $t_2$, and $t_3$.
However, none of $t_1$, $t_2$, and $t_3$ have envy in $\pi$ as mentioned above, so $t_4$ cannot be a blocking agent. 
Agent $s$'s neighbors are $t_1$, $t_2$, and $t_3$, but utilities of agents $x_i$, $y_i$, $z_i$, $a_i$, $b_i$, $c_i$, $d_i$, $e_i$, and $f_i$ $(1 \leq i \leq n)$ for $t_1$, $t_2$, and $t_3$ are 0, so their utilities decrease to 0 by moving to $\pi (s)$, that is, they do not envy $s$.
Hence, $s$ cannot be a blocking agent.
Thus no agent in Group 1 is a blocking agent. 

{\bf\boldmath Group 2. $a_i$, $b_i$, and $c_i$ $(1 \leq i \leq n)$.}
Note that these agents have only one favorite and are now sitting next to the favorite.
Therefore, to increase the utility, an agent still has to sit next to his favorite after the swap.
For instance, $b_2$'s favorite is $y_2$, so $b_2$ may envy only $x_2$, $y_2$, $z_2$, and $e_2$. 
However, in this case, $b_2$ actually has no envy because his utility decreases to 1 by swapping a seat with any one of these four agents.
Checking in this way, we can see that Group 2 agents have no envy in $\pi$.
Therefore, we can conclude that no agent in Group 2 is a blocking agent.

{\bf\boldmath Group 3. $x_i$, $y_i$, $d_i$, $e_i$, and $f_i$ $(1 \leq i \leq n)$.}
Note that these agents have only one favorite and are now sitting next to the favorite.
Therefore, to increase the utility, an agent still has to sit next to his favorite after the swap.
For instance, $y_2$'s favorite is $z_2$, so $y_2$ may envy only $c_2$, $f_2$, $z_2$, and $x_3$. 
However, if $y_2$ swaps a seat with $z_2$, $x_3$, or $f_2$, $y_2$'s utility remains unchanged.
Also, from the fact that agents in Group 2 are not blocking agents, $c_2$ is not a blocking agent.
Hence, $y_2$ cannot be a blocking agent.
Checking in this way, we can see that Group 3 agents having 
indices 2 through $n-1$ are not blocking agents.
By a similar argument, we can also see that, except for $e_1$, $x_n$, $y_n$, and $e_n$, Group 3 agents having indices 1 or $n$ are not blocking agents. 
It then remains to consider $e_1$, $x_n$, $y_n$, and $e_n$.
First, $e_1$ envies $b_1$ and $x_1$.
However, $b_1$ is a Group 2 agent, so $b_1$ is not a blocking agent.
Moreover, if $x_1$ swaps a seat with $e_1$, $x_1$'s utility decreases to 1, so $x_1$ does not envy $e_1$.
Secondly, $x_n$ envies $b_n$ and $z_n$.
However, $b_n$ is a Group 2 agent, so $b_n$ is not a blocking agent.
Moreover, if $z_n$ swaps a seat with $x_n$, $z_n$'s utility decreases, so $z_n$ does not envy $x_n$.
Thirdly, $y_n$ envies $c_n$, $z_n$, and $f_n$.
However, $c_n$ is Group 2 agent, so $c_n$ is not a blocking agent.
Moreover, if $z_n$ (resp.~$f_n$) swaps a seat with $y_n$, $z_n$'s (resp.~$f_n$'s) utility decreases, so $z_n$ (resp.~$f_n$) does not envy $y_n$.
Finally, $e_n$ envies $b_n$ and $z_n$.
However, $b_n$ is a Group 2 agent, so $b_n$ is not a blocking agent.
Moreover, if $z_n$ swaps a seat with $e_n$, $z_n$'s utility decreases, so $z_n$ does not envy $x_n$.
Therefore, we can conclude that no agent in Group 3 is a blocking agent.

{\bf\boldmath Group 4. $z_i$ $(1 \leq i \leq n)$.}
We now have only to care about blocking pairs within Group 4 agents.
Observe that for any $z_i$ and $z_j$, $z_i$ does not envy $z_j$ because $z_i$ cannot increase his utility by moving to $\pi(z_j)$.
Thus no agent in Group 4 can be a blocking agent.

From the above discussion, there is no blocking agent in $\pi$, hence $\pi$ is exchange-stable.

For the opposite direction, assume that there is an exchange-stable arrangement $\pi$ of $P$ on an $\ell \times m$ grid graph.
We will show that $G$ has a Hamiltonian path.
We say that an agent $p$ is {\it isolated} if $p$'s neighbors in $\pi$ are included in $\{t_1, t_2, t_3, t_4\}$.
We first show that $s$ is isolated.
Suppose not, and let $p \not \in \{t_1, t_2, t_3, t_4\}$ be an agent who is seated next to $s$.
Since the maximum degree of the seat graph is four, there must be an agent $t \in \{t_1, t_2, t_3, t_4\}$ who is {\it not} seated next to $s$.
Note that $p$'s utility is now at most $-17+4\times 3=-5$, but if he moves to $\pi(t)$, the utility becomes at least $-4$.
Similarly, $t$'s utility is 0 now, but if he moves to $\pi(p)$, it increases to 1.
These mean that $(p, t)$ is a blocking pair, contradicting the stability of $\pi$.
Therefore, $s$ is isolated.
Note that there can be at most one isolated agent in $A \setminus \{ s \}$, as shown in Figure \ref{fig:14}, where $a_8$ is isolated.

\begin{figure}[htbp]
\begin{center}
\includegraphics[width=105mm]{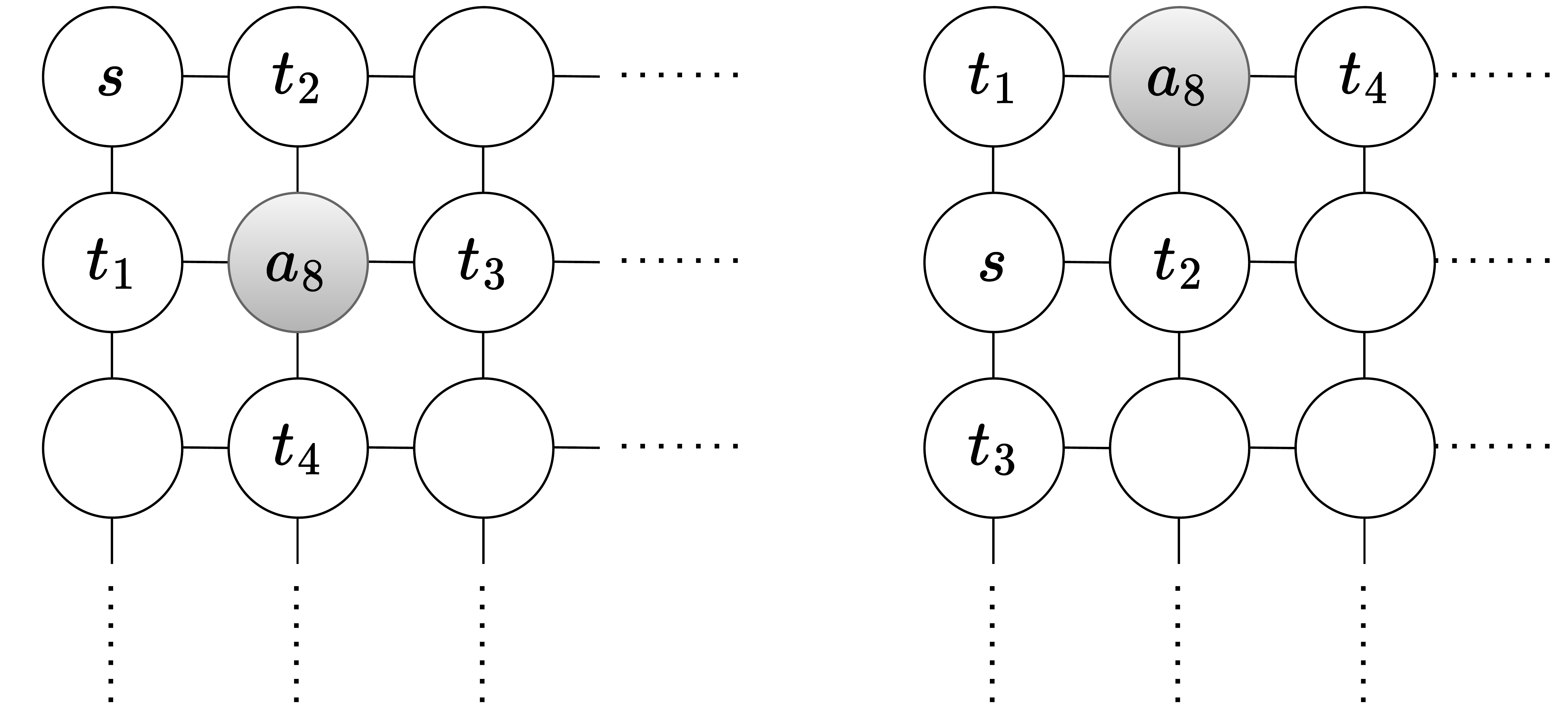}
\caption{Examples of $\pi$ including one isolated agent in $A \setminus \{ s \}$}\label{fig:14}
\end{center}
\end{figure}

Next, we claim that $x_i (1 \leq i \leq n)$ is not isolated for the following reason: 
If $x_i$ is isolated, then $a_i$ is not a neighbor of $x_i$.
Since $a_i$ is not isolated (as there can be at most one isolated agent other than $s$), $a_i$ has at least one neighbor to whom $a_i$'s utility is $-1$.
Thus $a_i$ can increase the utility by moving to $\pi(s)$.
Since $s$ is isolated, his utility is 0, but he can have a positive utility by moving to $\pi(a_i)$ since $a_i$ is not isolated.
Therefore $(a_i, s)$ is a blocking pair, contradicting the stability of $\pi$.
For the same reason, we can see that none of $y_i (1 \leq i \leq n)$, $z_i (1 \leq i \leq n)$, and $c_n$ are isolated.

Moreover, we claim that if $a_i$ (resp.~$d_i$) $(1 \leq i \leq n)$ is not isolated, $a_i$ (resp.~$d_i$) is placed next to $x_i$.  For, if not, by swapping $a_i$ (resp.~$d_i$) and $s$, $a_i$ (resp.~$d_i$) can increase the utility from at most $-1$ to $0$ and $s$ can increase the utility from $0$ to at least $1$. Therefore, $(a_i, s)$ (resp.~$(d_i, s)$) is a blocking pair, a contradiction. 
For the same reason, for $1 \leq i \leq n$, $b_i$ and $e_i$ must be placed next to $y_i$ if not isolated, and $c_i$ and $f_i$ must be placed next to $z_i$ if not isolated. 
Also, for the same reason, we can see that $x_i$ is placed next to $y_i$ and $y_i$ is placed next to $z_i$ in $\pi$ for $1 \leq i \leq n$.

Then, let us consider $z_i (1 \leq i \leq n-1)$. 
Recall that the vertex $v_i$ has at least one outgoing edge in $G$.
We will show that $z_i$ is placed next to some $x_p$ such that $(v_i, v_p) \in E$.
Suppose not. Then $z_i$'s utility is at most $-1$ because $z_i$ is not isolated as proved above. 
Then, by swapping $z_i$ and $s$, $z_i$ can increase the utility from at most $-1$ to $0$ and $s$ can increase the utility from $0$ to at least $1$. Therefore, $(z_i, s)$ is a blocking pair, a contradiction. 
We also show that $z_{i}$ can never have more than one $x_{p}$s as his neighbor.
Assume on the contrary that $x_{p}$ and $x_q$ are $z_i$'s neighbors. 
Recall that $y_i$ must also be a neighbor of $z_i$, so $z_i$ is placed at a vertex of degree 3 or 4.
First, suppose that the degree of $\pi (z_i)$ is 4 (Figure \ref{fig:15}(i)).
Recall that $x_i$ must be a neighbor of $y_i$, and $b_i$ and $e_i$ must be neighbors of $y_i$ if they are not isolated.
Similarly, $y_p$ (resp.~$y_q$) must be a neighbor of $x_p$ (resp.~$x_q$) and $a_p$ and $d_p$ (resp.~$a_q$ and $d_q$), if not isolated, must be neighbors of $x_p$ (resp.~$x_q$).
To sum up, these nine agents must be placed at one of seven vertices 1 through 7 in Figure \ref{fig:15}(i) or must be isolated, so at least two of them must be isolated.
However, this is impossible because at most one agent can be isolated. 
There are several more cases according whether how $y_i$, $x_p$, and $x_q$ are placed around $z_i$, but it is easy to see that any one of them is impossible by the same reasoning.
The case when the degree of $\pi(z_i)$ is 3 (Figure \ref{fig:15}(ii)) can be argued similarly.
In this case, these nine agents must be placed at one of five vertices 1 through 5 or must be isolated, which is impossible.

Next, we show that $x_i (1 \leq i \leq n)$ can have at most one $z_p$ as a neighbor.
This can be shown in the same manner as for $z_i$ above, by assuming that $z_p$ and $z_q$ are neighbors of $x_i$ and by noting that
\begin{itemize}
\item $y_i$ must be a neighbor of $x_i$,

\item $b_i$, $e_i$ and $z_i$ must be neighbors of $y_i$,

\item $y_p$, $c_p$ and $f_p$ must be neighbors of $z_p$, and

\item $y_q$, $c_q$ and $f_q$ must be neighbors of $z_q$.

\end{itemize}

\begin{figure}[htbp]
\begin{center}
\includegraphics[width=135mm]{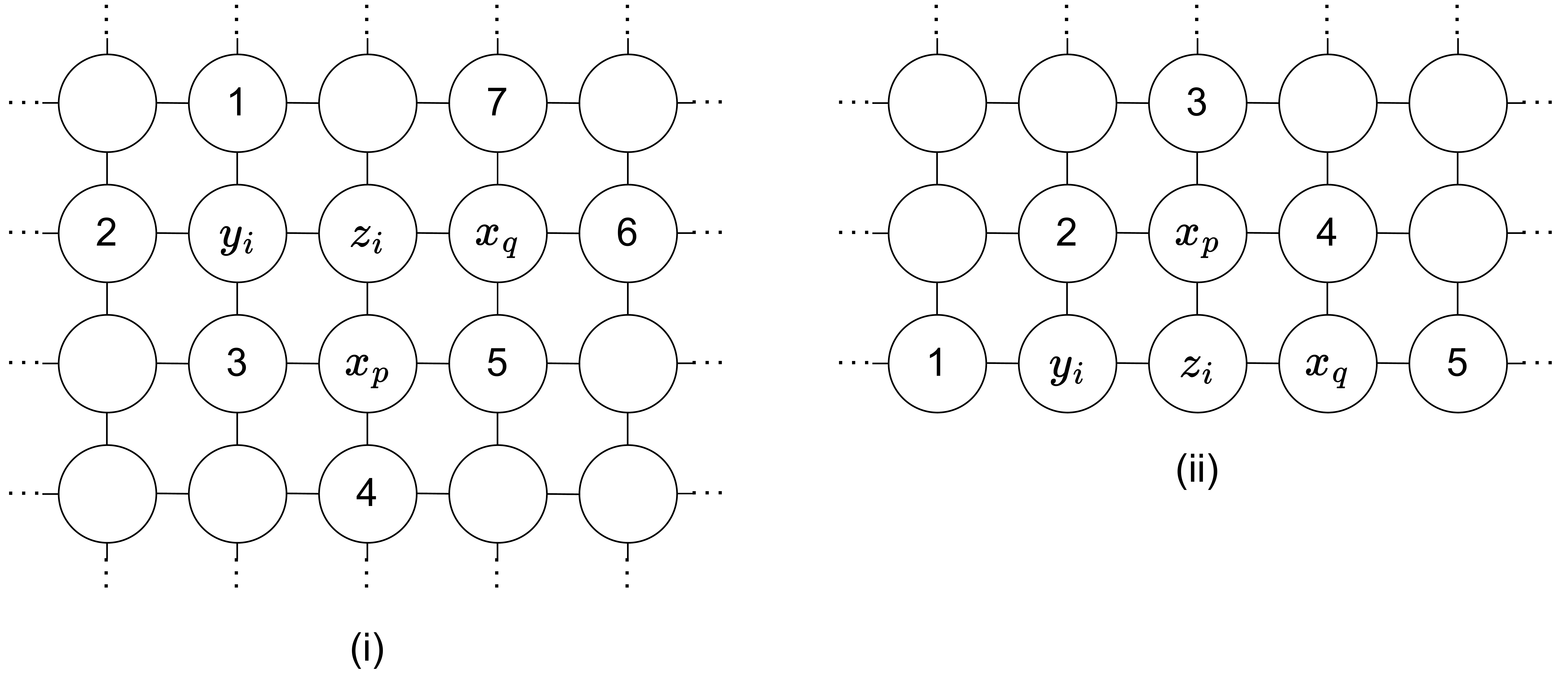}
\caption{One possible configuration of $y_j$, $x_p$, and $x_q$}\label{fig:15}
\end{center}
\end{figure}

Now, from $\pi$, construct the directed graph $G'=(V', E')$ where $V'= \{ u_1, u_2, \ldots, u_n \}$ and there is an arc $(u_{i}, u_{j}) \in E'$ if and only if $z_i$ is placed next to $x_j$ in $\pi$. From the above observations, we know that each $u_{i} (1 \leq i \leq n-1)$ has exactly one outgoing edge and each $u_{i} (1 \leq i \leq n)$ has at most one incoming edge. These facts imply that $G'$ consists of at most one directed path and some number of directed cycles. In the following, we will show that there is no cycle. 
Assume on the contrary that there exists a cycle $u_{s_{1}}, u_{s_{2}}, \ldots, u_{s_t}, u_{s_{1}}$ of length $t$.
Then, in the seat arrangement $\pi$, $z_{s_{i}}$ is placed next to $x_{s_{i+1}}$ for $1 \leq i \leq t-1$ and $z_{s_{t}}$ is placed next to $x_{s_{1}}$, so there exists a cycle $x_{s_{1}}, y_{s_{1}}, z_{s_{1}}, x_{s_{2}}, y_{s_{2}}, z_{s_{2}}, \dots, x_{s_{t}}, y_{s_{t}}, z_{s_{t}}, x_{s_{1}}$ in the seat graph.
This implies that there exists a bend where the cycle changes direction, as depicted Figure \ref{fig:16}(i). (Here, the bend occurs at the vertex $x_{s_{k}}$ is placed, but the following arguments hold for any other choice of this agent.) 
Recall that $y_{s_{k-1}}$ must be placed next to $z_{s_{k-1}}$, and $c_{s_{k-1}}$ and $f_{s_{k-1}}$ must be placed next to $z_{s_{k-1}}$ or must be isolated.
Similarly, $z_{s_k}$ must be placed next to $y_{s_k}$, and $b_{s_k}$ and $e_{s_k}$ must be placed next to $y_{s_k}$ or must be isolated.
Since it is impossible that these six agents are placed at vertices 1 through 5, at least one of $c_{s_{k-1}}$, $f_{s_{k-1}}$, $b_{s_k}$, and $e_{s_k}$ is isolated.
Now, to make the following argument precise, let us define a {\em bend} as a sequence of three vertices $z_{s_{k-1}}, x_{s_k}, y_{s_k}$, and call $x_{s_k}$ the {\em center} of the bend.
Let us define the {\em bottom-left bend} as the bend whose center lies at the lowest row of the cycle among ones whose centers lie on the leftmost column of the cycle (see Figure \ref{fig:16}(ii)).
Similarly, define the {\em top-right bend} as the bend whose center lies at the highest row among ones whose centers lie on the rightmost column.
By the choice of these special bends, it is easy to check that three vertices in the bottom-left bend and three vertices in the top-right bend do not share a common vertex.
Hence each bend produces at least one isolated agent, but this is a contradiction because at most one agent can be isolated as we have seen before.
Thus we have excluded the existence of a cycle and can conclude that $G'$ consists of one path $u_{s_{1}}, u_{s_{2}}, \ldots, u_{s_{n}}$.
By construction of $G'$, for each $i (1\leq i \leq n-1$), $z_{s_{i}}$ is placed next to $x_{s_{i+1}}$ in $\pi$. Thus, by the aforementioned property, there is an arc $(v_{s_{i}}, v_{s_{i+1}})$ in $G$ and hence there is a Hamiltonian path $v_{s_{1}}, v_{s_{2}}, \ldots, v_{s_{n}}$. 
This completes the proof.

\begin{figure}[htbp]
\begin{center}
\includegraphics[width=150mm]{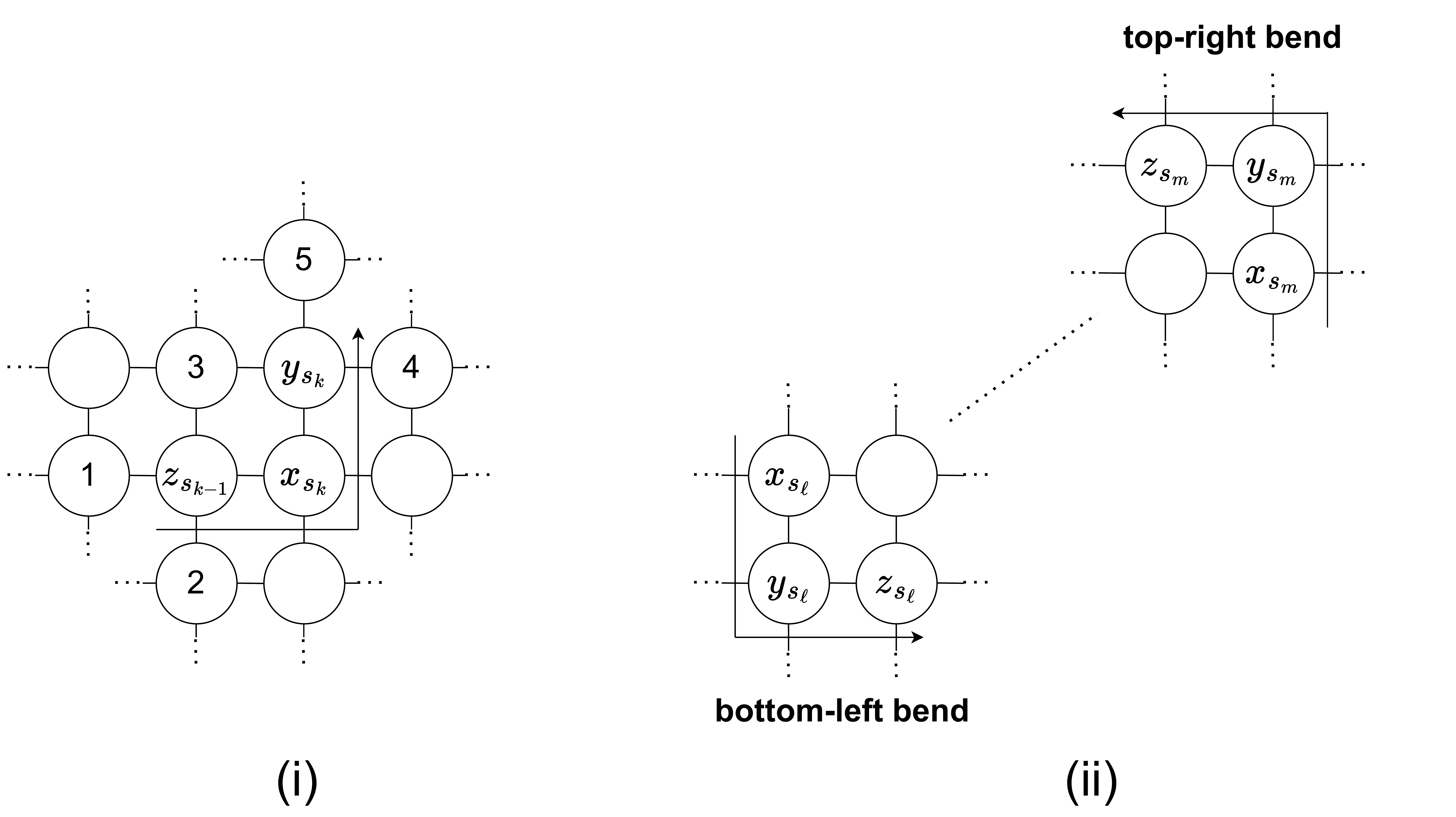}
\caption{Examples of a bend, the bottom-left bend, and the top-right bend}\label{fig:16}
\end{center}
\end{figure}
\end{proof}

\section{Conclusion}
In this paper, we studied the computational complexity of EFA and ESA on $\ell \times m$ grid graphs and showed that both problems are NP-complete.
One of future research directions would be to seek for conditions on preferences when the problems on grid graphs become polynomially solvable.

\section*{Acknowledgments}
 
This work was partially supported by JSPS KAKENHI Grant Numbers JP20K11677 and JP24H00696, and the joint project of Kyoto University and Toyota Motor Corporation, titled ``Advanced Mathematical Science for Mobility Society''.

\end{document}